\documentclass[lettersize,journal]{IEEEtran}
\usepackage{amsmath,amsfonts}
\usepackage{algorithmic}
\usepackage{algorithm}
\usepackage{array}
\usepackage{textcomp}
\usepackage{stfloats}
\usepackage{url}
\usepackage{verbatim}
\usepackage{graphicx}
\usepackage{cite}
\hyphenation{op-tical net-works semi-conduc-tor IEEE-Xplore}

\usepackage{amsmath,amsthm}   
\usepackage{amssymb}
\usepackage{nomencl}
\usepackage{multirow}

\usepackage{bm}
\usepackage{lipsum}
\usepackage{makeidx}
\usepackage{enumerate}
\usepackage{color}

\usepackage[font=small,labelfont=bf]{caption}
\usepackage[font=small,labelfont=bf]{subcaption}

\usepackage{tikz,pgfplots}
\usetikzlibrary{calc}
\usetikzlibrary{intersections}
\usetikzlibrary{arrows,shapes}
\usetikzlibrary{positioning}

\newtheorem{corollary}{Corollary}

\newtheorem{lemma}{Lemma}

\theoremstyle{definition}

\allowdisplaybreaks

\begin{document}
\title{NOMA-based Improper Signaling for Multicell MISO RIS-assisted Broadcast Channels}
\author{Mohammad Soleymani$^*$ \emph{Member, IEEE}, 
Ignacio Santamaria$^\dag$ \emph{Senior Member, IEEE},\\
Eduard Jorswieck$^\ddag$ \emph{Fellow, IEEE}, and
Sepehr Rezvani$^\ddag$ \emph{Student Member, IEEE}
 \\ \thanks{ 
$^*$Mohammad Soleymani is with the Signal and System Theory Group, Universit\"at Paderborn, , 33100 Paderborn, Germany   
(e-mail: \protect\url{mohammad.soleymani@sst.upb.de}).  

$^\dag$Ignacio Santamaria is with the Department of Communications Engineering, University of Cantabria, 39005 Santander, Spain (e-mail: \protect\url{i.santamaria@unican.es}).

$^\ddag$ Eduard Jorswieck and Sepehr Rezvani are with Institute for Communications Technology, Technische Universit\"at Braunschweig, 38106 Braunschweig, Germany
(e-mails: \protect\url{{jorswieck,rezvani}@ifn.ing.tu-bs.de})

The work of I. Santamaria has been partly  supported by the project ADELE PID2019-104958RB-C43, funded by MCIN/ AEI
/10.13039/501100011033.

The work of Eduard Jorswieck was supported in part by the Federal Ministry of Education and Research (BMBF, Germany) in the program of ``Souver\"an. Digital. Vernetzt.'' joint project 6G-RIC, project identification number: 16KISK020K and 16KISK031.
}}
\maketitle

\begin{abstract}
In this paper, we study the performance of reconfigurable intelligent surfaces (RISs) in a multicell broadcast channel (BC) that employs improper Gaussian signaling (IGS) jointly with non-orthogonal multiple access (NOMA) to optimize either the minimum-weighted rate or the energy efficiency (EE) of the network. We show that although the RIS can significantly improve the system performance, it cannot mitigate interference completely, so we have to employ other  interference-management techniques to further improve performance. We show that the proposed NOMA-based IGS scheme can substantially outperform proper Gaussian signaling (PGS) and IGS schemes that treat interference as noise (TIN) in particular when the number of users per cell is larger than the number of base station (BS) antennas (referred to as overloaded networks). In other words, IGS and NOMA complement to each other as interference management techniques in multicell RIS-assisted BCs. Furthermore, we consider three different feasibility sets for the RIS components showing that even a RIS with a small number of elements provides considerable gains for all the feasibility sets. 
\end{abstract} 
\begin{IEEEkeywords}
 Energy efficiency,  improper Gaussian signaling, majorization minimization, MISO broadcast channels, NOMA, reflecting intelligent surface, spectral efficiency.
\end{IEEEkeywords}
\section{Introduction}
Reconfigurable intelligent surfaces (RISs) are promising technologies for next generations of wireless communication systems  \cite{huang2020holographic, wu2021intelligent, di2020smart, elmossallamy2020reconfigurable}.  RIS can modulate the channel coefficients, which provides some additional degrees of freedom in the system design. By optimizing the propagation channel, RIS can improve different performance metrics and/or can manage interference.
In this work, we study the performance of RIS in a multicell broadcast channel (BC), which is an interference-limited system. It is known that  RIS can improve the performance of such a system, however, it should be further investigated what is the main role of RIS in such improvements. For instance, are the RIS benefits primarily because of its ability to cancel interference in overloaded systems, or are they due to a coverage improvement? In this paper, we provide an answer for these important questions. 

\subsection{Related work}
Interference has always been among the main performance restrictions  for modern wireless communication systems, and therefore interference-management techniques play an essential role in improving the performance of interference-limited networks \cite{andrews2014will}. 
Treating interference as noise (TIN) is the optimal decoding strategy when the interference level is low \cite{annapureddy2009gaussian}. Additionally, in the presence of strong interference,  the optimal decoding strategy is successive interference cancellation (SIC) in which the signal of the interfering user is decoded at first, and then subtracted from the received signal \cite{sato1981capacity}. 
However, for operational points between these two extreme cases the optimal strategy to handle interference is either  unknown or involves very high complexities. In these cases, we usually have to employ effective, but not necessarily optimal, interference-management techniques such as improper Gaussian signaling (IGS) and/or non-orthogonal-multiple-access (NOMA)-based techniques. 

IGS was employed as an interference-management technique for the first time in \cite{cadambe2010interference}, where the authors showed that IGS can increase the degrees-of-freedom (DoF) of the 3-user IC. In the last decade, IGS has been applied to various interference-limited systems, showing that it can increase the spectral and energy efficiency of such systems with TIN  \cite{javed2018improper, soleymani2020improper, gaafar2017underlay, amin2017overlay, soleymani2019energy, lameiro2015benefits, lameiro2017rate, lagen2016superiority, lagen2016coexisting,  zeng2013transmit, ho2012improper, soleymani2019improper,  soleymani2019robust, soleymani2019ergodic, Sole1909:Energy, tuan2019non, nasir2020signal, nasir2019improper, nguyen2021improper, yu2020improper, yu2020improper2, park2013sinr, yu2021maximizing}. For instance, the papers \cite{gaafar2017underlay, amin2017overlay, soleymani2019energy, lameiro2015benefits} showed that IGS can improve the rate and/or energy efficiency (EE) of the secondary user in cognitive radio systems. In \cite{lagen2016coexisting, zeng2013transmit, soleymani2019improper, ho2012improper}, it is shown that IGS can enlarge the rate region of the 2-user IC. 
The papers \cite{soleymani2019robust,soleymani2019ergodic} showed that IGS can increase the spectral efficiency (SE) of the 2-user IC under realistic assumptions regarding the channel state information (CSI).  These papers present a few representative examples of the superiority of IGS in interference-limited systems. For a more detailed overview of the applications of improper signaling in wireless communications,  the reader is referred to \cite{javed2020journey}.

TIN is a simple and practical decoding scheme, which is shown to be optimal from a generalized-DoF (GDoF) perspective when the interference level is low \cite{geng2015optimality}. However, TIN can be highly suboptimal in the presence of strong interference.  Under strong interference, the optimal decoding strategy is SIC, which is the basic technique in NOMA. As an example, \cite{ding2017survey} has shown that under strong interference NOMA can significantly improve the performance of TIN in a single-input single-output (SISO) BC. Of course, applying SIC among all users entails some challenges especially in large-scale multicell BCs with multiple-antenna BSs  and many users per cell \cite{clerckx2021noma}.  One of the challenges to implement SIC is to obtain the optimal user ordering, which may require to solve an  NP-hard problem \cite{ni2021resource}. Moreover, applying SIC with many users per cell requires complex devices to be able to decode and subtract the signal of several users. To address these issues, hybrid schemes have been proposed in which users are clustered so that SIC is only applied to the users in a  given cluster.

How to solve the aforementioned challenges remains an open question in RIS-assisted systems.
RIS is an emerging technology, which has been shown to be able to improve the spectral and energy efficiency of various systems  \cite{huang2020holographic, wu2021intelligent, di2020smart, elmossallamy2020reconfigurable,  huang2019reconfigurable, wu2019intelligent, kammoun2020asymptotic, pan2020multicell, zhang2020intelligent, yang2020risofdm, zuo2020resource, mu2020exploiting, ni2021resource, yang2021reconfigurable, yu2020joint, yu2021maximizing, zhou2020framework, shaikh2022downlink, soleymani2023spectral}. 
For example, the papers \cite{huang2019reconfigurable, wu2019intelligent, kammoun2020asymptotic, yu2020joint} considered various performance metrics such as global energy efficiency (GEE), sum rate, minimum SINR and showed that RIS can improve the spectral and energy efficiency of a single-cell BC with a multi-antenna BS and multiple single-antenna users. The paper \cite{pan2020multicell} considered a multiple-input, multiple-output (MIMO) multicell BC and showed that RIS can improve the weighted-sum rate of the system. 
In \cite{zhang2020intelligent}, it was shown that RIS can improve the weighted-sum rate for secondary users in an underlay cognitive radio (CR) system. In addition, RIS can increase the achievable rate of an  orthogonal-frequency-division-multiplexing (OFDM) scheme for a single-antenna point-to-point RIS-assisted system, as shown in \cite{yang2020risofdm}. The authors in \cite{zhou2020framework} considered instantaneous, but noisy channel state information (CSI) at transmitters and proposed a  framework for robust designs in MISO RIS-assisted systems.

Despite these promising results, there are still many research questions regarding the use of RIS in multicell BCs that must be thoroughly investigated. For example, RIS can modulate channels and cancel leakage and/or interference links \cite{jiang2022interference}. It should be clarified whether and/or how we can exploit this feature in different interference-limited systems. For instance, in a single-cell SISO BC with two users, both the useful and interfering signals are received via the same channel, and canceling the interference link cancels the useful signal as well \cite{ho2012instantaneous}. This means that there are some limitations on the use of RIS as an interference-management technique.  As a result, other interference-management techniques may be needed to fully exploit benefits of RIS, especially in overloaded networks. Indeed, as we will show in this work, RIS cannot fully handle interference in overloaded multicell BCs, but it can greatly improve the system coverage and even increase the benefits of IGS and NOMA. Another important parameter for the performance of RIS is its position in the system. Due to a more severe large-scale path loss of RIS links, it is important to have a line-of-sight (LoS) link between BS and RIS as well as between RIS and users. Thus, RISs should be located close to users to reduce the path loss and increase the probability of having a LoS link. Otherwise, the benefit of RIS would be minor \cite{soleymani2022improper}.

\subsection{Motivation}
\begin{table*}
\centering
\footnotesize
\caption{A brief comparison of the most related works.}\label{table-1}
\begin{tabular}{|l|c|c|c|c|c|c|c|c|c|}
	\hline
 & This paper& \cite{soleymani2022improper}&\cite{nasir2020signal,tuan2019non}& \cite{yu2020joint, yu2021maximizing}&
\cite{javed2018improper, soleymani2020improper, gaafar2017underlay, amin2017overlay,  lameiro2015benefits, lameiro2017rate, lagen2016superiority, lagen2016coexisting,  zeng2013transmit, ho2012improper, soleymani2019improper,  soleymani2019robust, soleymani2019ergodic}&\cite{   kammoun2020asymptotic, pan2020multicell, zhang2020intelligent, yang2020risofdm}&
\cite{mu2020exploiting, zuo2020resource,yang2021reconfigurable}&
\cite{huang2019reconfigurable}\\
\hline
IGS&$\surd$&$\surd$&$\surd$&$\surd$&$\surd$&&&\\
\hline
RIS&$\surd$&$\surd$&&$\surd$&&$\surd$&$\surd$&$\surd$\\
\hline
NOMA&$\surd$&&$\surd$&&&&$\surd$&\\
\hline
EE metrics&$\surd$&$\surd$&&&&&&$\surd$\\
\hline
		\end{tabular}
\normalsize
\end{table*} 
In our previous study \cite{soleymani2022improper}, we 
showed that IGS can improve the spectral and EE of multicell RIS-assisted BCs with TIN, so IGS and RIS provide additive benefits. As indicated, TIN is optimal only when the interference level is low. However, it is not the case in modern wireless communication systems, where the performance of such systems are mainly limited by interference \cite{andrews2014will}. Thus, it is important to investigate whether and how IGS  can improve the performance of RIS-assisted systems  when it is combined with another advanced interference-management technique, which employs SIC to cancel part of interference.This, indeed, motivates us to investigate the performance of more advanced interference-management techniques (a combination of IGS and NOMA) in a multicell RIS-assisted BC.

IGS has been shown to be a powerful interference-management technique; however, it also has its own limitations. 
Particularly, when we increase the number of resources in frequency (by OFDM \cite{soleymani2018improper}), in time (by time sharing \cite{hellings2018improper}), or in space (by MIMO \cite{soleymani2020improper}) for a fixed number  of users, the benefits of IGS decrease or may even vanish. 
Thus,  one would expect that IGS might not be beneficial if NOMA is applied since a significant part of the interference would be canceled by SIC. However, we show that IGS can provide a considerable gain with NOMA. Indeed, our results show that NOMA and IGS can be employed as complementary and mutually beneficial interference-management techniques. Note that \cite{nasir2020signal, tuan2019non} also showed that IGS with NOMA can improve the minimum rate of users in a multicell BC without RIS. In this paper, to the best of our knowledge, we apply a NOMA-based IGS scheme to RIS-assisted systems to improve rate and EE metrics for the first time in the literature.  Additionally, we will provide  a comprehensive analysis on the RIS performance in practical scenarios with realistic assumptions and try to answer the following research questions:
\begin{itemize}
\item When do we need interference-management techniques such as NOMA and IGS  in RIS-assisted systems? 

\item How many RIS components do we need and how should they be configured/optimized to reap benefits from RIS?
\end{itemize}
These questions are very important for implementing RIS in practice since we should avoid adding unnecessary complexities introduced by considering additional interference-management techniques and/or adding more RIS components. We provide a brief comparison between our work and most related papers in the literature in Table \ref{table-1}.

\subsection{Contribution}
In this paper, we propose NOMA-based IGS schemes to improve the spectral and energy efficiency of multicell RIS-assisted BCs. We make realistic assumptions regarding the channel model, large-scale path loss, and reflecting coefficients of RIS components by considering different feasibility sets for the reflecting coefficients based on the models in \cite{wu2021intelligent}. 
We consider a multicell BC in which multiple-antenna BSs serve several single-antenna users.  
Furthermore, we consider multiple RISs in the system such that each cell has at least one RIS. The reason is that, as we showed in \cite{soleymani2022improper}, the position of RISs plays a key role, and distributed implementation of RIS locations can considerably outperform a co-located RIS implementation. 

 We aim at maximizing either the minimum-weighted rate or the EE of users, which are practical metrics, widely used in the literature \cite{zeng2013transmit, Sole1909:Energy, tuan2019non, nasir2020signal, nasir2019improper, yu2020improper, soleymani2021distributed, soleymani2020rate}. Employing a rate/EE profile technique, the whole rate/EE region can be characterized by solving the maximization of the minimum-weighted rate/EE \cite{zeng2013transmit, Sole1909:Energy}. We show that these metrics can be improved even by RIS with a small number of elements.
However, a RIS alone is not able to fully cancel intracell interference when BSs are overloaded, which enforces us to employ  additionally other interference-management techniques such as IGS and NOMA. In the proposed NOMA-based IGS schemes, we divide users into several clusters with two users each. We then apply NOMA to each 2-user cluster and employ IGS to handle the remaining interference.
The proposed NOMA-based IGS scheme substantially outperforms the IGS schemes in \cite{soleymani2022improper} as well as NOMA with PGS from both spectral and energy efficiency points of view.

Three different feasibility sets for the RIS components  are considered. 
In the first set, whose performance can be viewed as upper bound of more practical RIS-assisted systems, the amplitude and the phase of each RIS component can be optimized independently. Our proposed algorithms converge to a stationary point of the considered optimization problems for this feasibility set. In the second, a more practical feasibility set is considered in which only the phase of each RIS component can be optimized, while the amplitude is fixed. This feasibility set can be referred to as  the reflecting RIS \cite{di2020smart}. 
In the third, we employ the model in \cite{abeywickrama2020intelligent}  in which the amplitude is not constant, but it is a function of the phase of RIS components. Our numerical results suggest that RIS can substantially improve the performance of the system with all these feasibility sets even with a small number of RIS components. 
Furthermore, reflecting RIS can perform close to the upper bound in some regimes. However, the performance gap increases with the transmission power of BSs.

We, furthermore, consider the impact of various parameters such as the number of BS antennas, the number of RIS components, and the number of users per cell on the performance of our proposed interference-management techniques. 
The goal is to find operational points that improve the system performance. 
Our numerical results show that the proposed NOMA-based IGS scheme can provide a significant gain when BSs are overloaded, i.e., the number of users per cell is larger than the number of transmit antennas at BSs.  
The benefits of the proposed schemes increase with the number of users per cell for a fixed number of BS antennas. However, the improvements are diminishing in the number of BSs antennas.
Indeed, the more overloaded the BSs are, the more gains NOMA-based IGS provides. This finding is also in line with our previous studies in \cite{soleymani2020improper, soleymani2022improper}; which altogether corroborate that the higher the interference level is, the more benefits IGS provides.

The main contributions of this work can be summarized as in the following:
\begin{itemize}
\item We propose NOMA-based IGS schemes for multicell RIS-assisted BCs and show that RIS cannot fully mitigate interference even when the RIS components are properly optimized.  Therefore, RIS has to be employed along with a suitable interference-management technique  (such as IGS/NOMA) to fully reap its  benefit. 

\item We show that the combination of NOMA with IGS techniques can substantially improve the performance of RIS-assisted multicell downlink channels whenever the BSs are overloaded. 

\item Our results show that RIS can provide significant gains even with a relatively small number of RIS components, suggesting that reconfigurable intelligent surfaces can be a promising and practical technology.
\end{itemize}

\subsection{Paper outline}
This paper is organized as follows. Section \ref{sec-sym} presents the RIS and system model, states the assumptions for the proposed NOMA-based IGS scheme, and presents the rate/EE metrics.  
Section \ref{sec-ii} and Section \ref{EE-sec}, respectively, present the proposed algorithms to solve the minimum-weighted rate and EE maximization  problems. 
Section \ref{sec-v-nr} provides numerical examples, and Section \ref{sec-con} concludes the paper.
\section{System model}\label{sec-sym}

\begin{figure}[t!]
    \centering
\includegraphics[width=.34\textwidth]{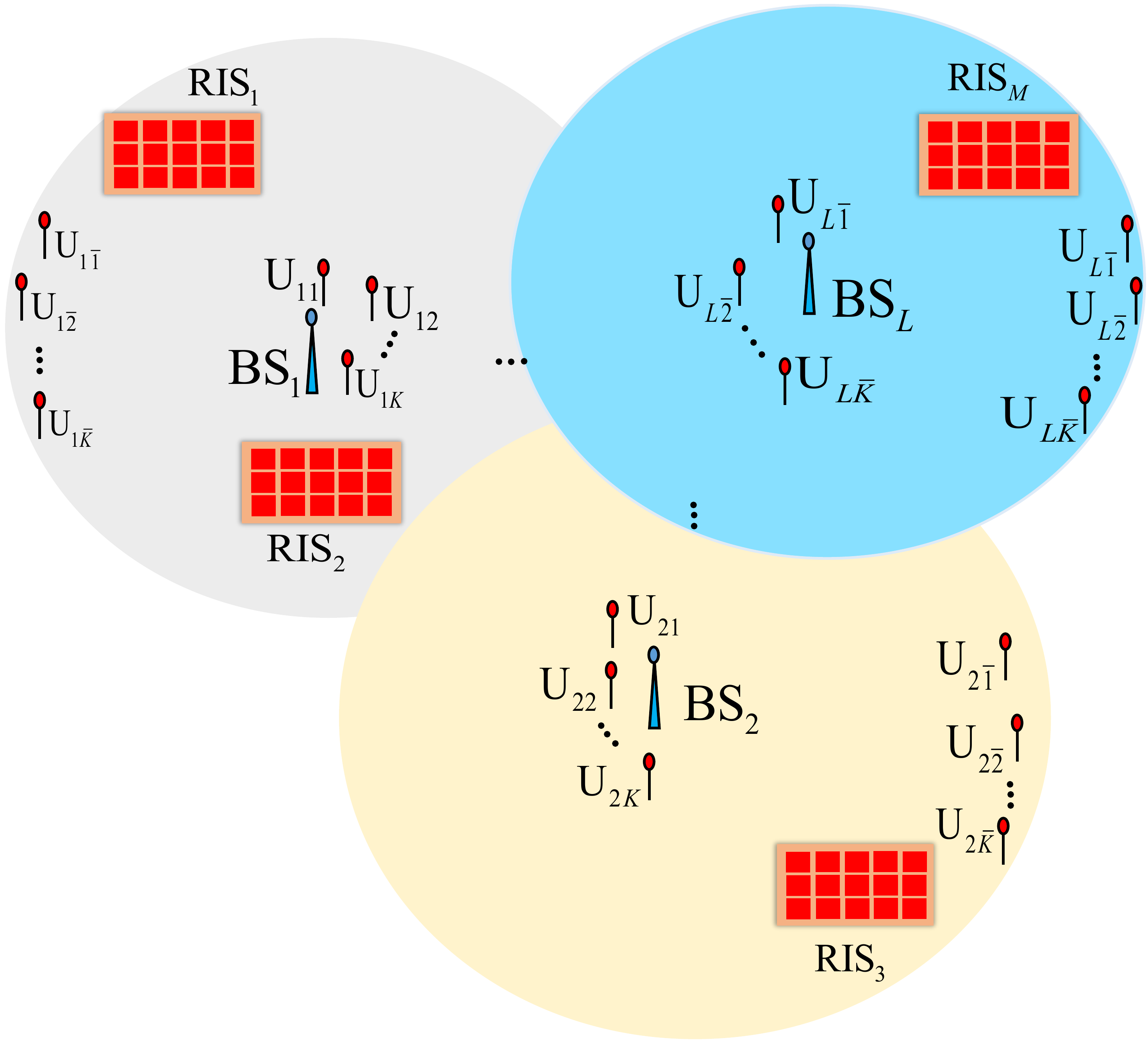}
     \caption{A multicell broadcast channel with RIS.}
	\label{Fig-sys-model}
\end{figure}
We consider a multicell broadcast channel with $L$ multi-antenna base stations (BSs) in which each BS has $N_{BS}$ antennas and serves $2K$ single-antenna users. We assume that there are $M\geq L$ RISs with $N_{RIS}$ reflecting elements in the system to assist the BSs. 
In \cite{soleymani2022improper}, it was shown that a distributed implementation of RIS can outperform a centralized RIS in which all RIS elements are co-located. The reason is that the link through RIS may suffer from a more severe large-scale path loss in comparison with the direct link. Thus, we assume that in each cell there is at least one RIS. Note that we consider symmetric BSs/cells/RISs for notational simplicity, but the analysis can be easily extended to an asymmetric scenario with different number of users associated  to each BS, different number of BS antennas, or different number of reflecting elements  in each RIS.

\subsection{RIS model}\label{sec-ris}
In this paper, we  assume perfect and global CSI similar to \cite{ni2021resource, huang2019reconfigurable, wu2019intelligent, kammoun2020asymptotic, pan2020multicell, zhang2020intelligent, zuo2020resource, mu2020exploiting, yang2021reconfigurable, yu2020joint, jiang2022interference}. We employ the channel model  in \cite{pan2020multicell, soleymani2022improper} for RIS-assisted systems. For the sake of completeness, we review the model in this subsection and briefly discuss its features. In RIS-assisted systems, there can be two types of links between a transmitter and a receiver: a direct link and links through RISs. Thus, the channel between BS $i$ and user $k$ associated to BS $l$, i.e., u$_{lk}$, is \cite{pan2020multicell, soleymani2022improper}
 \begin{equation}\label{eq-ch}
\mathbf{h}_{lk,i}\left(\{\bm{\Theta}\}\right)=\underbrace{\sum_{m=1}^M\mathbf{f}_{lk,m}\bm{\Theta}_m\mathbf{G}_{mi}}_{\text{Links through RIS}}+
\underbrace{\mathbf{d}_{lk,i}}_{\text{Direct link}}\in \mathbb{C}^{1\times N_{BS}},
\end{equation}
where $\mathbf{d}_{lk,i}\in \mathbb{C}^{1\times N_{BS}}$ is the direct link between BS $i$ and u$_{lk}$, $\mathbf{G}_{mi}\in \mathbb{C}^{N_{RIS}\times N_{BS}}$ is the channel matrix between BS $i$ and RIS $m$, and $\mathbf{f}_{lk,m}\mathbb{C}^{1\times N_{RIS}}$ is the channel vector between RIS $m$ and u$_{lk}$.
The matrix  $\bm{\Theta}_m\in\mathbb{C}^{N_{RIS}\times N_{RIS}}$ contains the reflecting coefficients of RIS $m$ as
\begin{equation}
\bm{\Theta}_m=\text{diag}\left(\theta_{m1}, \theta_{m2},\cdots,\theta_{mN_{RIS}}\right),
\end{equation}
where each $\theta_{mi}$ for all $m$, $i$ is a complex-valued optimization parameter. 
We represent the set of reflecting coefficients $\{\bm{\Theta}_m\}_{m=1}^M$ as $\{\bm{\Theta}\}$. 
To simplify the notations, we drop the dependency of the channels with respect to $\{\bm{\Theta}\}$ unless it causes confusion. 
Note that  in this paper, we consider a regular RIS, which can only reflect signals. Indeed, both transceivers should be in the reflection space of a RIS in order to have a link through the RIS. If BS $i$ (or user u$_{lk}$) is not in the reflection space of RIS $m$, then we have $\mathbf{G}_{mi}=\mathbf{0}$ (or $\mathbf{f}_{lk,m}=\mathbf{0}$).
Moreover, note that we can optimize the channels only through  the reflecting coefficients $\theta_{mi}$ since the other parameters in \eqref{eq-ch} are given. The readers are referred to \cite{pan2020multicell, soleymani2022improper} for a  discussion on the large-scale and small-scale fading of the channels.
\subsubsection{Feasibility sets for the RIS components}
Here, we describe three common feasibility sets for the RIS components.
In the ideal case, the amplitude and the phase of each reflecting coefficient can be independently  optimized, which leads to the following feasibility set \cite[Eq. (11)]{wu2021intelligent}
\begin{equation}
\mathcal{T}_{U}=\left\{\theta_{mi}:|\theta_{mi}|^2\leq 1 \,\,\,\forall m,i\right\}.
\end{equation}
This feasibility set provides a  performance upper bound. 

In a more realistic feasibility set, often found in the literature, we 
can tune the phase  of each reflecting coefficient continuously between $0$ and $2\pi$, while the amplitude is fixed to $1$ \cite{di2020smart, wu2021intelligent}.  This feasibility set can be written as  
\begin{equation}
\mathcal{T}_{I}=\left\{\theta_{mi}:|\theta_{mi}|= 1 \,\,\,\forall m,i\right\}.
\end{equation}
This case is also referred to as intelligent reflecting surface (IRS)  to emphasize that there is only a passive phase-shifting beamforming at the intelligent surfaces \cite{di2020smart}. This feasibility set has been widely used in the studies related to RIS \cite{di2020smart, wu2021intelligent, wu2019intelligent, kammoun2020asymptotic, yu2020joint, pan2020multicell, zhang2020intelligent}.
Note that, in some studies such as \cite{kammoun2020asymptotic}, an attenuation factor for the reflecting coefficients is considered such that $|\theta_{mi}|=\eta< 1$. In this paper, we do not consider this attenuation  factor; however, our analysis can easily be adapted to the feasibility set in \cite{kammoun2020asymptotic}.

There is yet another feasibility set in which the amplitude is not constant, but instead it is a deterministic function of the phase as \cite{abeywickrama2020intelligent} 
\begin{equation}\label{eq=9}
0\leq |\theta|_{\min}\leq |\theta_{mi}|= f(\angle \theta_{mi})\leq |\theta|_{\max}\leq 1\,\,\,\forall m,i,
\end{equation}
where $f(\cdot)$ is
\begin{equation}
f(\angle \theta_{mi})= |\theta|_{\min}+( 1-|\theta|_{\min})\left(\frac{\sin\left(\angle \theta_{mi}-\phi\right)+1}{2}\right)^{\alpha},
\end{equation}
where $|\theta|_{\min}$, $\alpha$ and $\phi$ are non-negative constant values. According to this model, the amplitude and phase of RIS components are not independent optimization parameters. 
The feasibility set of $\theta_{mi}$ for this model is 
\begin{equation}
\mathcal{T}_{C}=\left\{\theta_{mi}:|\theta_{mi}|= f(\angle \theta_{mi}), \,\angle \theta_{mi}\in[-\pi,\pi]  \,\,\,\forall m,i\right\}.
\end{equation}
Note that in \cite{abeywickrama2020intelligent}, $|\theta|_{\max}=1$. Moreover, when $\alpha=0$, or equivalently $|\theta|_{\min}=1$, this feasibility set is equivalent to $\mathcal{T}_{I}$. Note than $\mathcal{T}_{I}\subset\mathcal{T}_U$ and $\mathcal{T}_{C}\subset\mathcal{T}_U$, which implies that the solution for the feasibility set $\mathcal{T}_U$ is an upper bound for the performance of the other feasibility sets.

\subsection{Signal model}
The broadcast signal from BS $l$ is 
\begin{equation}
\mathbf{x}_l=\sum_{k=1}^{2K}\mathbf{x}_{lk}\in\mathbb{C}^{N_{BS}\times 1},
\end{equation}
 which is a superposition of the signals intended for all  users associated to BS $l$. 
We assume that $\mathbf{x}_{lk}$s for all $l,k$ are uncorrelated zero-mean (and possibly improper) Gaussian signals. 
Let us remind the reader that the real and imaginary parts of improper signals are correlated and/or have unequal powers \cite{schreier2010statistical,adali2011complex}. To deal with impropriety, we employ the real-decomposition method, which is more suitable for  studying this scenario. By the real-decomposition method, all parameters are written in a real domain by considering the real and imaginary parts of a signal as separate variables. 
We represent the real decomposition of $\mathbf{x}_{lk}$ by $\underline{\mathbf{x}}_{lk}=\left[\begin{array}{cc}\mathfrak{R}\{\mathbf{x}_{lk}\}^T&\mathfrak{I}\{\mathbf{x}_{lk}\}^T\end{array}\right]^T$, where $\mathfrak{R}\{x\}$ and $\mathfrak{I}\{x\}$ take, respectively, the real and imaginary part of $x$. Moreover, we represent the transmit covariance matrix of $\underline{\mathbf{x}}_{lk}$ by $\mathbf{P}_{lk}=\mathbb{E}\left\{\underline{\mathbf{x}}_{lk}\underline{\mathbf{x}}_{lk}^T\right\}$.

The received signal for the user u$_{lk}$ is
\begin{subequations}\label{rec-sig}
\begin{align}
y_{lk}&=\sum_{i=1}^L\mathbf{h}_{lk,i}
\sum_{j=1}^{2K}\mathbf{x}_{ij}+\mathbf{n}_{lk}
\\&
=
\underbrace{\mathbf{h}_{lk,l}
\mathbf{x}_{lk}}_{\text{Desired signal}}+
\underbrace{\mathbf{h}_{lk,l}
\sum_{j=1,j\neq k}^{2K}\mathbf{x}_{lj}}_{\text{Intracell interference}}+
\underbrace{ \sum_{i=1,i\neq l}^L\mathbf{h}_{lk,i}
\mathbf{x}_{i}}_{\text{Intercell interference}}+
\underbrace{n_{lk}}_{\text{Noise}},
\end{align}
\end{subequations}
where 
$\mathbf{h}_{lk,l}\in\mathbb{C}^{1\times N_{BS}}$ is the  channel between BS $l$ and u$_{lk}$, and $n_{lk}\in\mathbb{C}$ is a zero-mean   additive  white Gaussian noise with variance $\sigma^2$. 
Using the real-decomposition method to model IGS, \eqref{rec-sig} becomes
\begin{align}\label{rec-sig=2}
\underline{\mathbf{y}}_{lk}
=
\underbrace{\underline{\mathbf{H}}_{lk,l}
\underline{\mathbf{x}}_{lk}}_{\text{Desired signal}}+
\underbrace{\underline{\mathbf{H}}_{lk,l}
\sum_{j=1,j\neq k}^{2K}\underline{\mathbf{x}}_{lj}}_{\text{Intracell interference}}+
\underbrace{ \sum_{i=1,i\neq l}^L\underline{\mathbf{H}}_{lk,i}
\underline{\mathbf{x}}_{i}}_{\text{Intercell interference}}+
\underbrace{\underline{\mathbf{n}}_{lk}}_{\text{Noise}},
\end{align}
 where $\underline{\mathbf{y}}_{lk}=\left[\begin{array}{cc}\mathfrak{R}\{y_{lk}\}&\mathfrak{I}\{y_{lk}\}\end{array}\right]^T\in\mathbb{R}^{2\times 1}$, $\underline{\mathbf{n}}_{lk}=\left[\begin{array}{cc}\mathfrak{R}\{n_{lk}\}&\mathfrak{I}\{n_{lk}\}\end{array}\right]^T\in\mathbb{R}^{2\times 1}$, and the channels in the real-decomposition method, $\underline{\mathbf{H}}_{lk,i}
\in\mathbb{R}^{2\times 2N_{BS}}$ for all $i,l,k$, are given by
 \begin{equation}
 \underline{\mathbf{H}}_{lk,i}
=\left[\begin{array}{cc}\mathfrak{R}\{\mathbf{h}_{lk,i}\}&-\mathfrak{I}\{\mathbf{h}_{lk,i}\}\\
\mathfrak{I}\{\mathbf{h}_{lk,i}\}&\mathfrak{R}\{\mathbf{h}_{lk,i}\}
\end{array}\right].
 \end{equation}

\subsection{NOMA}

We assume that the intercell interference (ICI) is treated as noise. Note that treating ICI as noise is a practical assumption since it might be very inefficient to decode and cancel ICI in a multicell BC with many users per cell \cite{cui2018qoe, lei2019load, ni2021resource, guo2021qos, you2020note, wang2019user, you2018resource}. The reason is that, due to the decodability constraint, the data transmission rates may be restricted by the weakest link between a BS and all the users (either inside or outside of the BS cell), which may significantly degrade the performance. Moreover, it may add unnecessary complexities to  the system design such as finding the optimal decoding order of the ICI messages and highly increase the signaling overheads of the system. It should be noted that treating ICI as noise does not mean that ICI is overlooked in the system design. In this paper, we jointly employ IGS and RIS to manage ICI by a careful design of the transmit covariance matrices and RIS components.

We, additionally, employ a NOMA-based technique to handle a part of the intracell interference. 
Employing a full SIC scheme  is very challenging even in a single-cell RIS-assisted broadcast channel \cite{yang2020intelligent, zeng2020sum}. The optimal user ordering in a single-cell RIS-assisted BC is conducted based on the effective signal-to-interference-plus-noise ratio (SINR), which depends on channel gains. 
However, in RIS-assisted systems, the effective SINR and consequently the user ordering depend on the channel coefficients, which in turn depend on the  reflecting coefficients \cite{yang2020intelligent}. Additionally, user ordering in a multicell BC is a difficult task since the users in a cell experience different levels of intercell interference, thus making it even more complicated to find the optimal user ordering in a multicell RIS-assisted BC, and 
to the best of our knowledge, the  optimal decoding strategy for such systems is unknown. 
Moreover, assuming $2K$ users per cell, there are $L\times 2K!$ possible user orderings in the multicell BC, so an exhaustive search is impossible from a practical point of view \cite{yang2020intelligent, zeng2020sum}. In addition to the complexities involved with finding the optimal decoding order, there are also other challenges to apply NOMA to a large number of users. For instance, the receives should be able to decode and cancel the signals of multiple users, which may increase the complexities and cost of devices. Thus, it may not be realistic to employ a full NOMA scheme especially when the number of users per cell grows, which motivates employing hybrid schemes \cite{liu2022evolution, ding2015impact, ding2017survey, dai2018survey, ding2020unveiling}. Note that  hybrid schemes can be very practical and efficient and may provide a suitable trade off between the system performance and complexity \cite{ding2017survey}.
Therefore, in this paper, we employ a hybrid scheme in which the users in each cell are divided into clusters with two users  each, and NOMA is applied only to the pairs of users. The optimal user ordering/pairing depends on the channel coefficients as well as the ICI level at the users for a multicell BC, and to the best of our knowledge, the optimal strategy for a multi-cell MISO RIS-assisted BC is unknown. 

\subsubsection{Proposed user ordering/pairing scheme}
In this paper, we assume that the user pairing is given, and we do not tackle obtaining the optimal user pairing/ordering. Instead, we focus on different aspects of RISs as well as on the resulting optimization problems, and we leave the optimal user pairing/ordering scheme for a future study. Note that a suboptimal approach for user pairing can be to pair users based on their distance to the serving BS \cite{nasir2020signal,tuan2019non}. For instance, in \cite{tuan2019non}, it was assumed that users in each cell belong to two groups with the same number of members in each group. They can be called cell-centric  users (CCUs) and cell-edge  users (CEUs). Then each CCU is randomly paired with a CEU. Due to the large-scale fading, it is reasonable to assume that CEUs experience a higher ICI level  than CCUs in systems without RIS. Additionally,  the direct link between the BS and one of its cell-edge serving user is weaker than the direct link for a CCU associated to the BS. It should be noted that such assumptions may not hold in RIS-assisted systems since RISs can modulate the channels, which implies this pairing scheme is not necessarily optimal in a multi-cell MISO RIS-assisted BC. However, this pairing scheme can be practical and involves much lower complexities than the optimal strategy.

\subsubsection{Rate expressions for IGS}
In the following, we consider the optimization problems for a given user pairing policy.
We represent the user with a higher decoding order with $k$ and its paired user with $\bar{k}=K+k$. According to the pairing scheme, user $k$ associated to BS $l$, u$_{lk}$, first decodes the signal of user $\bar{k}$ associated to BS $l$, u$_{l\bar{k}}$, and then subtract it from its received signal. In other words, the data intended for u$_{l\bar{k}}$ should be transmitted in a data rate that is decodable by both u$_{lk}$ and u$_{l\bar{k}}$.
Thus, the rate of user $\bar{k}$ associated to BS $l$, u$_{l\bar{k}}$, is \cite{tuan2019non}
\begin{equation}\label{eq-12}
r_{l\bar{k}}=\min\left(\bar{r}_{l\bar{k}},\bar{r}_{lk\rightarrow\bar{k}}\right),
\end{equation}
where $\bar{r}_{l\bar{k}}$ is the rate at  u$_{l\bar{k}}$ when trying to decode its own signal:
\begin{align}\label{eq-22}
\bar{r}_{l\bar{k}}&=\frac{1}{2}\log_2\left|{\bf I}+
{\bf D}^{-1}_{l\bar{k}}
 \underline{{\bf H}}_{l\bar{k},l}
{\bf P}_{l\bar{k}}\underline{{\bf H}}_{l\bar{k},l}^T
 \right|
\\ &
=
\underbrace{
\frac{1}{2}\log_2\left|{\bf D}_{l\bar{k}}
+
\underline{{\bf H}}_{l\bar{k},l}
{\bf P}_{l\bar{k}}\underline{{\bf H}}_{l\bar{k},l}^T
\right|
}_{\bar{r}_{l\bar{k},1}}
-
\underbrace{
\frac{1}{2}\log_2\left|{\bf D}_{l\bar{k}}
\right|
}_{\bar{r}_{l\bar{k},2}},
\label{eq-22-2}
\end{align}
where 
$\mathbf{D}_{l\bar{k}}$ is the interference-plus-noise covariance matrix at u$_{l\bar{k}}$
\begin{equation}
\label{eq=15}
\mathbf{D}_{l\bar{k}}
=
\underbrace{
\sum_{i=1,i \neq l}^L\underline{\mathbf{H}}_{l\bar{k},i}
\mathbf{P}_{i}\underline{\mathbf{H}}_{l\bar{k},i}^T
}_{\text{Intercell interference}}
+
\underbrace{
\sum_{j= 1,j\neq \bar{k}}^{2K}
\underline{\mathbf{H}}_{l\bar{k},l}
\mathbf{P}_{lj}\underline{\mathbf{H}}_{l\bar{k},l}^T
}_{\text{Intracell interference}}
+
\underbrace{
\frac{\sigma^2}{2}\mathbf{I}
}_{\text{Noise}},
\end{equation}
where $\mathbf{P}_{i}=\sum_{j=1}^{2K}\mathbf{P}_{ij}$.
Note that the intracell interference in \eqref{eq=15} consists of two terms: intracell-intracluster and intracell-intercluster interference. The intracell-intracluster interference consists of the signal intended for the paired user $k$, i.e., $\underline{\mathbf{H}}_{l\bar{k},i}\mathbf{P}_{i}\underline{\mathbf{H}}_{l\bar{k},i}^T$, while the intracell-intercluster interference term consists of the signal intended for users of the other clusters in cell $l$.
Moreover, the term $\bar{r}_{lk\rightarrow\bar{k}}$ is the rate at u$_{lk}$ when trying to decode the signal intended for u$_{l\bar{k}}$:
\begin{align}\label{eq-25}
\bar{r}_{lk\rightarrow\bar{k}}&=\frac{1}{2}\log_2\left|{\bf I}+
\left[{\bf D}_{lk}
+
\underline{{\bf H}}_{lk,l}
{\bf P}_{lk}\underline{{\bf H}}_{lk,l}^T
\right]^{-1}
 \underline{{\bf H}}_{lk,l}
{\bf P}_{l\bar{k}}\underline{{\bf H}}_{lk,l}^T
\right|\\
 &=
\underbrace{
\frac{1}{2}\log_2\left|{\bf D}_{lk}
+
\underline{{\bf H}}_{lk,l}
{\bf P}_{lk}\underline{{\bf H}}_{lk,l}^T
+
\underline{{\bf H}}_{lk,l}
{\bf P}_{l\bar{k}}\underline{{\bf H}}_{lk,l}^T
\right|
}_{\bar{r}_{lk\rightarrow\bar{k},1}
}
\nonumber
\\
&\,\,\,\,\,
-
\underbrace{
\frac{1}{2}\log_2\left|{\bf D}_{lk}
+
\underline{{\bf H}}_{lk,l}
{\bf P}_{lk}\underline{{\bf H}}_{lk,l}^T
\right|
}_{\bar{r}_{lk\rightarrow\bar{k},2}
},
\label{eq-26}
\end{align}
where ${\bf D}_{lk}$ can similarly be written as 
\begin{equation}
\mathbf{D}_{lk}
=
\underbrace{
\sum_{i=1,i \neq l}^L\underline{\mathbf{H}}_{lk,i}
\mathbf{P}_{i}\underline{\mathbf{H}}_{lk,i}^T
}_{\text{Intercell interference}}
+
\underbrace{
\sum_{j= 1,j\neq k,\bar{k}}^{2K}
\underline{\mathbf{H}}_{lk,l}
\mathbf{P}_{lj}
\underline{\mathbf{H}}_{lk,l}^T
}_{\text{Intracell interference}}
+
\underbrace{
\frac{\sigma^2}{2}\mathbf{I}
}_{\text{Noise}},
\end{equation}
where the intracell interference consists of only the intracell-intercluster interference since the intracell-intracluster interference is canceled by employing SIC. 
Note that by transmitting data to u$_{l\bar{k}}$ at rate $r_{l\bar{k}}$ in \eqref{eq-12}, it is ensured that the data is decodable at both u$_{l{k}}$ and u$_{l\bar{k}}$ users.  
Additionally, the rate for u$_{lk}$ is
\begin{align}\label{eq-28}
r_{lk}&=\frac{1}{2}\log_2\left|{\bf I}+
{\bf D}_{lk}^{-1}
 \underline{{\bf H}}_{lk,l}
{\bf P}_{lk}\underline{{\bf H}}_{lk,l}^T
\right|
\\ &
=
\underbrace{
\frac{1}{2}\log_2\left|{\bf D}_{lk}
+
\underline{{\bf H}}_{lk,l}
{\bf P}_{lk}\underline{{\bf H}}_{lk,l}^T
\right|
}_{\tilde{r}_{lk,1}
}
-
\underbrace{
\frac{1}{2}\log_2\left|{\bf D}_{lk}
\right|}
_{\tilde{r}_{lk,2}
}.\label{eq-29}
\end{align}
Note that we write the rates in two different formats so that  we can use the more suitable format for each optimization problem. For instance, to optimize over transmit covariance matrices it is more convenient to write the rates as a difference of two concave functions in $\{{\bf P}\}=\{{\bf P}_{lk},\forall l,k\}$ since it allows us to employ majorization-minimization (MM)  algorithms. 
However, for optimizing over the reflecting coefficients, we employ the rate expressions in \eqref{eq-22-2}, \eqref{eq-26} and \eqref{eq-29}, as will be discussed in the following sections.

\subsubsection{Energy-efficiency}
The EE of each user is defined as the ratio between its achievable rate and the power intended for the data transmission to the user, i.e., \cite{zappone2015energy, buzzi2016survey}
\begin{equation}
EE_{lk}=\frac{r_{lk}}
{P_c+\eta\text{Tr}\left(\mathbf{P}_{lk}\right)},
\end{equation}
where $\eta^{-1}$ is the power transmission
efficiency of each BS, and $P_c$ is the constant power consumption for transmitting data to a user, given by \cite[Eq. (27)]{soleymani2022improper}.

\section{Maximizing the Minimum-Weighted Rate}\label{sec-ii}
As a figure of merit for the spectral efficiency and coverage, we maximize the  minimum-weighted rate, which can be written as
\begin{align}\label{opt-prob-wmrm}
\underset{
r,\{\bm{\Theta}\}\in\mathcal{T}, \{\mathbf{P}\}\in\mathcal{P}}
{\max} \hspace{0.2cm}&r,
&
\text{s.t.}\hspace{0.4cm}& \lambda_{lk}r_{lk}\geq r, &\forall l,k,
\end{align}
where $\lambda_{lk}$ is the corresponding non-negative weight, $r_{l\bar{k}}=\min\left(\bar{r}_{l\bar{k}},\bar{r}_{lk\rightarrow\bar{k}}\right)$, and $\mathcal{P}$ denotes the feasibility set of the transmit covariance matrices, which
for the IGS and PGS schemes  are, respectively, \cite{soleymani2022improper}
\begin{align}
\mathcal{P}_{I}&\!=\left\{\{\mathbf{P}\}:\sum_{k=1}^{2K}\text{Tr}\left(\mathbf{P}_{lk}\right)\leq p_l, \mathbf{P}_{lk}\succcurlyeq\mathbf{0}, \forall l,k\right\},\\
\mathcal{P}_{P}&\!=\!\!\left\{\!\!\{\mathbf{P}\}\!:\!\!\!\sum_{k=1}^{2K}\!\text{Tr}\!\left(\mathbf{P}_{lk}\right)\!\!\leq p_l, \mathbf{P}_{lk}=\mathbf{P}_{t},\mathbf{P}_{lk}\succcurlyeq\mathbf{0}, \forall l,k\!\!\right\}\!\!,
\end{align}
where $p_l$ is the power budget of BS $l$, and $\mathbf{P}_{t}$ is a $2\times 2$ block matrix that fulfills the structure in \cite[Eq. (5)]{soleymani2022improper}. As can be verified through (23) and (24), PGS is a special case of IGS, and thus, the optimal IGS scheme never performs worse than the optimal PGS scheme.  Readers can refer to  \cite{schreier2010statistical,adali2011complex} for further details on improper signaling. 
Note that our proposed algorithms can be applied to both IGS and PGS schemes, and to write the equations for the most general case, we represent the feasibility set by $\mathcal{P}$.
Moreover, note that employing the rate profile technique, we can characterize the rate region by solving \eqref{opt-prob-wmrm} and varying the weights for all possible $\lambda_{lk}$, satisfying $\sum_{\forall l}\sum_{\forall k}\lambda_{lk}^{-1}=1$ \cite{soleymani2022improper}.
Additionally, the weights $\lambda_{lk}$ can be chosen based on the priorities of the users \cite{soleymani2022improper}. 

Unfortunately, \eqref{opt-prob-wmrm} is a non-convex difficult programming problem, and it is complicated to find its global optimal solution. 
Hence, we propose a suboptimal scheme by employing an iterative alternating optimization (AO) approach. 
That is, at iteration $t$, we first fix the reflecting coefficients to the solution of the previous step, $\{\bm{\Theta}^{(t-1)}\}$, and update the covariance matrices, $\{\mathbf{P}^{(t)}\}$. 
We then fix the covariance matrices $\{\mathbf{P}^{(t)}\}$ and optimize over reflecting coefficients to obtain $\{\bm{\Theta}^{(t)}\}$. 
We iterate this approach until a convergence criterion is met. 
Note that the convergence point of our algorithm may depend on the initial point since it is an iterative algorithm, based on MM. 
The initial point should be feasible and can be chosen either randomly or based on a heuristic approach \cite{lipp2016variations}. To obtain a better performance, we can run the algorithms for several initial points and choose the best solution among them  \cite{lipp2016variations}.  
It should be noted that the proposed algorithms actively optimize the transmission parameters at BSs as well as the environment  through RIS. Indeed, each iteration consists of two steps. The first step is to optimize the transmit covariance matrices, which are transmission parameters at BSs. The second step is to optimize the RIS components, which modify the environment.  

Unfortunately, the corresponding optimization problems are  still complicated even when we fix covariance matrices or reflecting coefficients.  
To tackle these problems, we employ MM and propose suboptimal algorithms to solve the optimization problems.

\subsection{Optimization of the transmit covariance matrices} 
In this subsection, we obtain transmit covariance matrices $\{\mathbf{P}^{(t)}\}$ for given reflecting coefficients $\{\bm{\Theta}^{(t-1)}\}$ by solving
\begin{align}\label{opt-prob-wmrm-2}
\underset{
r, \{\mathbf{P}\}\in\mathcal{P}}
{\max} \hspace{0.2cm}&r,
&
\text{s.t.}\hspace{0.4cm}& \lambda_{lk}r_{lk}\geq r, &\forall l,k.
\end{align}
To solve this non-convex problem, we can apply difference of convex programming (DCP) to obtain a suboptimal solution for \eqref{opt-prob-wmrm-2}. 
The reason is that the rates can be written as a difference of two concave functions. 
Note that DCP falls into MM algorithms and obtains a stationary point of the original problem \cite{sun2017majorization}. 
DCP is based on the first-order Taylor expansion that approximates the convex part of the rates with an affine (linear) function. 
To find a suitable surrogate function for the rates, we employ the results in the following lemma.
\begin{lemma}\label{lem-1} 
Consider the function $f(\mathbf{P})=\log\det(\mathbf{A}+\mathbf{B}\mathbf{P}\mathbf{B}^T),$ where $\mathbf{A}\in\mathbb{R}^{N\times N}$ and $\mathbf{B}\in\mathbb{R}^{N\times M}$ are constant matrices. Moreover, $\mathbf{A}$ and $\mathbf{P}$ are $M \times M$ real positive semi-definite matrices. 
Using the first-order Taylor expansion, an affine upper bound for $f(\mathbf{P})$ can be found as
\begin{equation*}
f(\mathbf{P}) \leq f(\mathbf{P}^{(t)})  + \text{\em{Tr}}\left(\mathbf{B}^T(\mathbf{A}+\mathbf{B}\mathbf{P}^{(t)}\mathbf{B}^T)^{-1}\mathbf{B}(\mathbf{P}-\mathbf{P}^{(t)})\right), 
\end{equation*}
where $\mathbf{P}^{(t)}$ is any feasible fixed point.
\end{lemma}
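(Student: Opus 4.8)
The plan is to recognize that $f(\mathbf{P})=\log\det(\mathbf{A}+\mathbf{B}\mathbf{P}\mathbf{B}^T)$ is a \emph{concave} function of $\mathbf{P}$ on the positive semi-definite cone, and then invoke the standard fact that the first-order Taylor expansion of a differentiable concave function is a global \emph{upper} bound: the tangent plane lies everywhere above the graph. Under this view the inequality in the lemma is nothing more than the supporting-hyperplane inequality for a concave function, so the only genuine computation is evaluating the gradient of $f$ at the expansion point $\mathbf{P}^{(t)}$.

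First I would establish concavity. The map $\mathbf{P}\mapsto\mathbf{A}+\mathbf{B}\mathbf{P}\mathbf{B}^T$ is affine in $\mathbf{P}$, and since $\mathbf{A}\succcurlyeq\mathbf{0}$ and $\mathbf{B}\mathbf{P}\mathbf{B}^T\succcurlyeq\mathbf{0}$ whenever $\mathbf{P}\succcurlyeq\mathbf{0}$, the image of this map lies in the PSD cone, where $\log\det(\cdot)$ is well defined and concave. Because the composition of a concave function with an affine map preserves concavity, $f$ is concave on the feasible set. I would then record the elementary calculus fact that for a differentiable concave function one has, for every feasible $\mathbf{P}^{(t)}$,
\begin{equation*}
f(\mathbf{P})\leq f(\mathbf{P}^{(t)})+\big\langle\nabla f(\mathbf{P}^{(t)}),\,\mathbf{P}-\mathbf{P}^{(t)}\big\rangle,
\end{equation*}
where $\langle\mathbf{X},\mathbf{Y}\rangle=\text{Tr}(\mathbf{X}^T\mathbf{Y})$ is the trace inner product on real matrices.

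It remains to compute the gradient. Using the differential identity $d\log\det(\mathbf{M})=\text{Tr}(\mathbf{M}^{-1}d\mathbf{M})$ with $\mathbf{M}=\mathbf{A}+\mathbf{B}\mathbf{P}\mathbf{B}^T$ and $d\mathbf{M}=\mathbf{B}\,d\mathbf{P}\,\mathbf{B}^T$, the directional derivative of $f$ at $\mathbf{P}^{(t)}$ in the direction $\mathbf{P}-\mathbf{P}^{(t)}$ equals $\text{Tr}\big((\mathbf{A}+\mathbf{B}\mathbf{P}^{(t)}\mathbf{B}^T)^{-1}\mathbf{B}(\mathbf{P}-\mathbf{P}^{(t)})\mathbf{B}^T\big)$, which by the cyclic property of the trace is exactly $\text{Tr}\big(\mathbf{B}^T(\mathbf{A}+\mathbf{B}\mathbf{P}^{(t)}\mathbf{B}^T)^{-1}\mathbf{B}(\mathbf{P}-\mathbf{P}^{(t)})\big)$. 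Substituting this expression for $\langle\nabla f(\mathbf{P}^{(t)}),\mathbf{P}-\mathbf{P}^{(t)}\rangle$ into the tangent-plane inequality yields precisely the claimed bound.

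The main obstacle is not analytical depth but careful bookkeeping: one must verify that $\mathbf{A}+\mathbf{B}\mathbf{P}^{(t)}\mathbf{B}^T$ is invertible so that the gradient is well defined (this holds, e.g., when $\mathbf{A}\succ\mathbf{0}$, which is the relevant case here since each interference-plus-noise matrix $\mathbf{D}$ contains the strictly positive noise term $\tfrac{\sigma^2}{2}\mathbf{I}$), and that the matrix differentiation is carried out consistently in the real-decomposition setting where all quantities are real and the trace pairing is symmetric. Once concavity and invertibility are in place, the remaining manipulations are routine.
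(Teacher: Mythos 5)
Your proposal is correct and follows essentially the same route as the paper's own proof: a first-order Taylor (supporting-hyperplane) upper bound justified by concavity of $\log\det$ composed with the affine map $\mathbf{P}\mapsto\mathbf{A}+\mathbf{B}\mathbf{P}\mathbf{B}^T$, followed by the standard gradient computation $\mathbf{B}^T(\mathbf{A}+\mathbf{B}\mathbf{P}^{(t)}\mathbf{B}^T)^{-1}\mathbf{B}$ and the cyclic/symmetry manipulation of the trace. If anything, your write-up is more complete, since you make explicit the concavity argument and the invertibility of $\mathbf{A}+\mathbf{B}\mathbf{P}^{(t)}\mathbf{B}^T$, both of which the paper's proof leaves implicit.
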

\begin{proof}
Employing the first-order Taylor expansion, we have
\begin{equation*}
f(\mathbf{P}) \leq f(\mathbf{P}^{(t)})  + \text{{Tr}}\left(\left[\frac{\partial f(\mathbf{P})}{\partial \mathbf{P}}|_{\mathbf{P}^{(t)}}\right]^T(\mathbf{P}-\mathbf{P}^{(t)})\right), 
\end{equation*}
where $\frac{\partial f(\mathbf{P})}{\partial \mathbf{P}}|_{\mathbf{P}^{(t)}}$ is the derivative of $f(\mathbf{P})$ with respect to $\mathbf{P}$ at $\mathbf{P}^{(t)}$, which is given by
\begin{equation*}
\frac{\partial f(\mathbf{P})}{\partial \mathbf{P}}|_{\mathbf{P}^{(t)}}=\mathbf{B}^T(\mathbf{A}+\mathbf{B}\mathbf{P}^{(t)}\mathbf{B}^T)^{-1}\mathbf{B}.
\end{equation*}
Note that $\frac{\partial f(\mathbf{P})}{\partial \mathbf{P}}$ is symmetric, thus we have 
$\left[\frac{\partial f(\mathbf{P})}{\partial \mathbf{P}}|_{\mathbf{P}^{(t)}}\right]^T=\frac{\partial f(\mathbf{P})}{\partial \mathbf{P}}|_{\mathbf{P}^{(t)}}$.
\end{proof}
A suitable concave lower bound for rates can be obtained by employing Lemma \ref{lem-1}. 
That is, we keep the concave part of the rate functions and approximate the convex part by a linear lower-bound function using Lemma \ref{lem-1}. 
Thus, the surrogate function for the rate of u$_{lk}$, i.e., the the user with a higher order,  can be written as 
\begin{multline}
\label{l-r-lk}
 r_{lk}\geq \tilde{r}_{lk}=r_{lk,1}\left(\{\mathbf{P}\},
 \{\bm{\Theta}^{(t-1)}\}\right) 
-r_{lk,2}^{(t-1)}
-
\\
\sum_{i=1}^L\sum_{j=1}^{2K}
\text{Tr}\left(
\frac{
\underline{\mathbf{H}}_{lk,i}^T
(\mathbf{D}_{lk}^{(t-1)})^{-1}
\underline{\mathbf{H}}_{lk,i}
}
{2\ln 2}
\left(\mathbf{P}_{ij}-\mathbf{P}_{ij}^{(t-1)}\right)
\right),
\end{multline}
where $r_{lk,2}^{(t-1)}=r_{lk,2}\left(\{\mathbf{P}^{(t-1)}\},
 \{\bm{\Theta}^{(t-1)}\}\right)$, and $\mathbf{D}_{lk}^{(t-1)}=\mathbf{D}_{lk}\left(\{\mathbf{P}^{(t-1)}\},
 \{\bm{\Theta}^{(t-1)}\}\right)$.
Similarly, a concave lower-bound for the rate of u$_{l\bar{k}}$, i.e., the user paired to u$_{lk}$, can be found as
\begin{equation}\label{l-r-lk-bar-low-32}
\tilde{r}_{l\bar{k}}=\min\left( \tilde{r}_{l\bar{k}}^l,\tilde{r}_{lk\rightarrow\bar{k}}^l\right),
\end{equation}
where $\tilde{r}_{lk\rightarrow\bar{k}}^l$ is given by \eqref{l-r-lk-bar-2} on the top of the next page and
\begin{figure*}[t]
\begin{align}
\tilde{r}_{lk\rightarrow\bar{k}}^l&=\bar{r}_{lk\rightarrow\bar{k},1}\left(\{\mathbf{P}\},
 \{\bm{\Theta}^{(t-1)}\}\right) 
-\bar{r}_{lk\rightarrow\bar{k},2}^{(t-1)}
-\sum_{i=1}^L\sum_{j=1}^{2K}
\text{Tr}\left(
\frac{
\underline{\mathbf{H}}_{lk,i}^T
\left(
\mathbf{D}_{lk}^{(t-1)}+\underline{\mathbf{H}}_{lk,i}
\mathbf{P}_{lk}^{(t-1)}
\underline{\mathbf{H}}_{lk,i}^T
\right)^{-1}
\underline{\mathbf{H}}_{lk,i}
}
{2\ln 2}
\left(\mathbf{P}_{ij}-\mathbf{P}_{ij}^{(t-1)}\right)
\right),
\label{l-r-lk-bar-2}
\\
\setcounter{equation}{32}
\nonumber
\hat{r}_{l\bar{k}}^l&=\bar{r}_{l\bar{k}}^{(t-1)} 
-\frac{1}{2\ln 2}\text{{ Tr}}\left(
\bar{\mathbf{V}}_{l\bar{k}}\bar{\mathbf{V}}_{l\bar{k}}^T\bar{\mathbf{Y}}_{l\bar{k}}^{-1}
\right)
+
\frac{1}{\ln 2}
\text{{ Tr}}\left(
\bar{\mathbf{V}}_{l\bar{k}}^T\bar{\mathbf{Y}}_{l\bar{k}}^{-1}
\underline{{\bf H}}_{l\bar{k},l}(\{\bm{\Theta}\}) {\bf P}_{l\bar{k}}^{1/2^{(t)}}
\right)
\\
&\hspace{.4cm}-
\frac{1}{2\ln 2}
\text{{ Tr}}\left(
(\bar{\mathbf{Y}}^{-1}_{l\bar{k}}-(\bar{\mathbf{V}}_{l\bar{k}}\bar{\mathbf{V}}^T_{l\bar{k}} + \bar{\mathbf{Y}}_{l\bar{k}})^{-1})^T
\left({\bf D}_{l\bar{k}} + \underline{{\bf H}}_{l\bar{k},l}(\{\bm{\Theta}\}) {\bf P}_{l\bar{k}}^{(t)}\underline{{\bf H}}_{l\bar{k},l}^T(\{\bm{\Theta}\}) 
\right)
\right)
\label{eq=33=}
\\
\nonumber
\hat{r}_{lk\rightarrow\bar{k}}^l&
=
\bar{r}_{lk\rightarrow\bar{k}}^{(t-1)} 
-\frac{1}{2\ln 2}\text{{ Tr}}\left(
\bar{\mathbf{V}}_{lk\rightarrow\bar{k}}\bar{\mathbf{V}}_{lk\rightarrow\bar{k}}^T\bar{\mathbf{Y}}_{lk\rightarrow\bar{k}}^{-1}
\right)
+
\frac{1}{\ln 2}
\text{{ Tr}}\left(
\bar{\mathbf{V}}_{lk\rightarrow\bar{k}}^T\bar{\mathbf{Y}}_{lk\rightarrow\bar{k}}^{-1}\underline{{\bf H}}_{lk,l}(\{\bm{\Theta}\}) {\bf P}_{l\bar{k}}^{1/2^{(t)}}
\right)
\\
&\hspace{.4cm}
-
\frac{1}{2\ln 2}
\text{{ Tr}}\left(
(\bar{\mathbf{Y}}^{-1}_{lk\rightarrow\bar{k}}-(\bar{\mathbf{V}}_{lk\rightarrow\bar{k}}\bar{\mathbf{V}}^T_{lk\rightarrow\bar{k}} + \bar{\mathbf{Y}}_{lk\rightarrow\bar{k}})^{-1})^T
(\underline{{\bf H}}_{lk,l}(\{\bm{\Theta}\}) {\bf P}_{l\bar{k}}^{(t)}\underline{{\bf H}}_{lk,l}^T(\{\bm{\Theta}\}) 
+
\mathbf{Y}_{lk\rightarrow\bar{k}})
\right),
\label{eq=34=}
\end{align}
\hrulefill 
\end{figure*}
\begin{align}
\setcounter{equation}{28}
\nonumber
\tilde{r}_{l\bar{k}}^l&=\bar{r}_{l\bar{k},1}\left(\{\mathbf{P}\},
 \{\bm{\Theta}^{(t-1)}\}\right) 
-\bar{r}_{l\bar{k},2}^{(t-1)}
\\&
\label{l-r-lk-bar}
-\sum_{i=1}^L\sum_{j=1}^{2K}
\text{Tr}\left(
\frac{
\underline{\mathbf{H}}_{l\bar{k},i}^T
(\mathbf{D}_{l\bar{k}}^{(t-1)})^{-1}
\underline{\mathbf{H}}_{l\bar{k},i}
}
{2\ln 2}
\left(\mathbf{P}_{ij}-\mathbf{P}_{ij}^{(t-1)}\right)
\right),
\end{align}
where $r_{l\bar{k},2}^{(t-1)}=r_{l\bar{k},2}\left(\{\mathbf{P}^{(t-1)}\},
 \{\bm{\Theta}^{(t-1)}\}\right)$, \\
$r_{lk\rightarrow\bar{k},2}^{(t-1)}=r_{lk\rightarrow\bar{k},2}\left(\{\mathbf{P}^{(t-1)}\},
 \{\bm{\Theta}^{(t-1)}\}\right),$ and \\
$\mathbf{D}_{l\bar{k}}^{(t-1)}=\mathbf{D}_{l\bar{k}}\left(\{\mathbf{P}^{(t-1)}\},
 \{\bm{\Theta}^{(t-1)}\}\right)$. 
Substituting the concave lower bounds in \eqref{opt-prob-wmrm-2}, we have
\begin{align}\label{opt-prob-wmrm-3}
\underset{
r, \{\mathbf{P}\}\in\mathcal{P}}
{\max} \hspace{0.2cm}&r,
&
\text{s.t.}\hspace{0.4cm}& \lambda_{lk}\tilde{r}_{lk}\geq r, &\forall l,k.
\end{align}
This optimization problem is convex and can be efficiently solved to obtain a new set of transmit covariance matrices $\{\mathbf{P}^{(t)}\}$.

\subsection{Optimization of the reflecting coefficients} \label{sec-orm}

In this subsection, we update the reflecting coefficients while the covariance matrices are fixed to $\{\mathbf{P}^{(t)}\}$. 
That is, we solve the following optimization problem 
\begin{align}\label{opt-prob-wmrm-rc}
\underset{
r,\{\bm{\Theta}\}\in\mathcal{T}}
{\max} \hspace{0.2cm}&r,
&
\text{s.t.}\hspace{0.4cm}& \lambda_{lk}r_{lk}\geq r, &\forall l,k.
\end{align}
Unfortunately, this problem is non-convex and difficult too. To solve it, we first find a suitable concave lower bound for the rates by employing the following lemma. Then, we consider each feasibility set separately.
\begin{lemma}[\!\!\cite{yu2020improper, yu2020joint}]\label{lem-2} 
The following inequality holds for all  $N \times N$ positive definite matrices $\mathbf{Y}$ and $\bar{\mathbf{Y}}$, and any arbitrary  $N \times M$ matrices $\mathbf{V}$ and $\bar{\mathbf{V}}$:
\begin{multline}
\ln \left|\mathbf{I}+\mathbf{V}\mathbf{V}^H\mathbf{Y}^{-1}\right|
\geq
 \ln \left|\mathbf{I}+\bar{\mathbf{V}}\bar{\mathbf{V}}^H\bar{\mathbf{Y}}^{-1}\right|
 \\-
\text{{\em Tr}}\left(
\bar{\mathbf{V}}\bar{\mathbf{V}}^H\bar{\mathbf{Y}}^{-1}
\right)
+
2\mathfrak{R}\left\{\text{{\em Tr}}\left(
\bar{\mathbf{V}}^H\bar{\mathbf{Y}}^{-1}\mathbf{V}
\right)\right\}\\
-
\text{{\em Tr}}\left(
(\bar{\mathbf{Y}}^{-1}-(\bar{\mathbf{V}}\bar{\mathbf{V}}^H + \bar{\mathbf{Y}})^{-1})^H(\mathbf{V}\mathbf{V}^H+\mathbf{Y})
\right).
\label{lower-bound}
\end{multline}
\end{lemma}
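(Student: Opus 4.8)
The plan is to prove the inequality through the rate--MMSE (WMMSE) equivalence, which converts the concave $\log\det$ on the left into a minorizer that is quadratic in $\mathbf{V}$, affine in $\mathbf{Y}$, and tight at the reference point $(\bar{\mathbf{V}},\bar{\mathbf{Y}})$. Interpreting $\mathbf{V}$ as an effective channel and $\mathbf{Y}$ as an interference-plus-noise covariance, Sylvester's identity $|\mathbf{I}+\mathbf{V}\mathbf{V}^H\mathbf{Y}^{-1}| = |\mathbf{I}+\mathbf{V}^H\mathbf{Y}^{-1}\mathbf{V}|$ lets me read the left-hand side as the mutual information $-\ln|\mathbf{E}_{\text{mmse}}|$, with MMSE matrix $\mathbf{E}_{\text{mmse}}=(\mathbf{I}+\mathbf{V}^H\mathbf{Y}^{-1}\mathbf{V})^{-1}$.

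First I would record two elementary facts. By concavity of $\log\det$ (Fenchel duality), for any positive-definite $\mathbf{E}$ one has $-\ln|\mathbf{E}| = \max_{\mathbf{W}\succ\mathbf{0}}\{\ln|\mathbf{W}|-\text{Tr}(\mathbf{W}\mathbf{E})\}+M$, attained at $\mathbf{W}=\mathbf{E}^{-1}$. Moreover, for any linear receiver $\mathbf{U}$ the error matrix $\mathbf{E}(\mathbf{U})=(\mathbf{I}-\mathbf{U}\mathbf{V})(\mathbf{I}-\mathbf{U}\mathbf{V})^H+\mathbf{U}\mathbf{Y}\mathbf{U}^H$ satisfies $\mathbf{E}_{\text{mmse}}\preceq\mathbf{E}(\mathbf{U})$ in the Loewner order, with equality at $\mathbf{U}_{\text{mmse}}=\mathbf{V}^H(\mathbf{V}\mathbf{V}^H+\mathbf{Y})^{-1}$. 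Chaining these two facts with any fixed $\bar{\mathbf{W}}\succ\mathbf{0}$ and any fixed $\bar{\mathbf{U}}$ gives, for all $\mathbf{V}$ and all $\mathbf{Y}\succ\mathbf{0}$,
\begin{equation*}
\ln|\mathbf{I}+\mathbf{V}\mathbf{V}^H\mathbf{Y}^{-1}| \;\geq\; \ln|\bar{\mathbf{W}}| - \text{Tr}\left(\bar{\mathbf{W}}\,\mathbf{E}(\bar{\mathbf{U}})\right) + M,
\end{equation*}
whose right-hand side is concave in $\mathbf{V}$ and affine in $\mathbf{Y}$ precisely because $\bar{\mathbf{W}}\succeq\mathbf{0}$ fixes the sign of the quadratic term.

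Then I would make the specific choice that pins the bound at the reference point, namely $\bar{\mathbf{U}}=\bar{\mathbf{V}}^H(\bar{\mathbf{V}}\bar{\mathbf{V}}^H+\bar{\mathbf{Y}})^{-1}$ and $\bar{\mathbf{W}}=\mathbf{I}+\bar{\mathbf{V}}^H\bar{\mathbf{Y}}^{-1}\bar{\mathbf{V}}=\bar{\mathbf{E}}_{\text{mmse}}^{-1}$. Expanding $\mathbf{E}(\bar{\mathbf{U}})=\mathbf{I}-\bar{\mathbf{U}}\mathbf{V}-\mathbf{V}^H\bar{\mathbf{U}}^H+\bar{\mathbf{U}}(\mathbf{V}\mathbf{V}^H+\mathbf{Y})\bar{\mathbf{U}}^H$, the constant collapses to $-\text{Tr}(\bar{\mathbf{V}}\bar{\mathbf{V}}^H\bar{\mathbf{Y}}^{-1})$, the cross term becomes $2\mathfrak{R}\{\text{Tr}(\bar{\mathbf{W}}\bar{\mathbf{U}}\mathbf{V})\}$, and the quadratic term becomes $-\text{Tr}(\bar{\mathbf{U}}^H\bar{\mathbf{W}}\bar{\mathbf{U}}(\mathbf{V}\mathbf{V}^H+\mathbf{Y}))$. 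To recover the stated form I would verify the two matrix-inversion-lemma identities $\bar{\mathbf{W}}\bar{\mathbf{U}}=\bar{\mathbf{V}}^H\bar{\mathbf{Y}}^{-1}$ and $\bar{\mathbf{U}}^H\bar{\mathbf{W}}\bar{\mathbf{U}}=\bar{\mathbf{Y}}^{-1}-(\bar{\mathbf{V}}\bar{\mathbf{V}}^H+\bar{\mathbf{Y}})^{-1}$: the first is the equalizer identity $(\mathbf{I}+\bar{\mathbf{V}}^H\bar{\mathbf{Y}}^{-1}\bar{\mathbf{V}})^{-1}\bar{\mathbf{V}}^H\bar{\mathbf{Y}}^{-1}=\bar{\mathbf{V}}^H(\bar{\mathbf{V}}\bar{\mathbf{V}}^H+\bar{\mathbf{Y}})^{-1}$, and the second follows from it together with $\bar{\mathbf{Y}}^{-1}-(\bar{\mathbf{V}}\bar{\mathbf{V}}^H+\bar{\mathbf{Y}})^{-1}=\bar{\mathbf{R}}^{-1}\bar{\mathbf{V}}\bar{\mathbf{V}}^H\bar{\mathbf{Y}}^{-1}$, where $\bar{\mathbf{R}}=\bar{\mathbf{V}}\bar{\mathbf{V}}^H+\bar{\mathbf{Y}}$. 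Substituting these turns the right-hand side verbatim into the claimed expression.

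I expect the main obstacle to be purely the algebraic bookkeeping of this last step: correctly applying the matrix inversion lemma to collapse $\bar{\mathbf{U}}^H\bar{\mathbf{W}}\bar{\mathbf{U}}$ into the difference $\bar{\mathbf{Y}}^{-1}-(\bar{\mathbf{V}}\bar{\mathbf{V}}^H+\bar{\mathbf{Y}})^{-1}$, and tracking which traces are Hermitian (so that $2\mathfrak{R}\{\cdot\}$ appears) versus already real. The inequality direction itself is automatic once the two facts above are in place, and tightness at $(\bar{\mathbf{V}},\bar{\mathbf{Y}})$ holds by construction because $\bar{\mathbf{U}}$ and $\bar{\mathbf{W}}$ are the MMSE quantities there; a final check at $\mathbf{V}=\bar{\mathbf{V}}$, $\mathbf{Y}=\bar{\mathbf{Y}}$ confirms equality. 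An alternative but messier route would be to upper-bound $\ln|\mathbf{Y}|$ by concavity and lower-bound $\ln|\mathbf{V}\mathbf{V}^H+\mathbf{Y}|$ directly, but producing the exact quadratic remainder that way is harder than the WMMSE argument.
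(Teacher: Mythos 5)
Your proof is correct. Note, however, that the paper itself does not prove Lemma~\ref{lem-2} at all: it imports the inequality from \cite{yu2020improper, yu2020joint} and uses it as a black box, so there is no in-paper argument to compare against. Your WMMSE-style derivation is a valid, self-contained reconstruction of the result, and it is essentially the argument that underlies the cited references: the two facts you invoke (the Fenchel-type variational characterization $-\ln|\mathbf{E}| = \max_{\mathbf{W}\succ\mathbf{0}}\{\ln|\mathbf{W}|-\mathrm{Tr}(\mathbf{W}\mathbf{E})\}+M$, and the Loewner-order optimality of the MMSE receiver $\mathbf{E}_{\mathrm{mmse}}\preceq\mathbf{E}(\mathbf{U})$) are exactly the ingredients of the rate--MMSE equivalence from which such surrogate bounds are built. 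The algebra you flag as the main risk does check out: $\bar{\mathbf{W}}\bar{\mathbf{U}}=(\mathbf{I}+\bar{\mathbf{V}}^H\bar{\mathbf{Y}}^{-1}\bar{\mathbf{V}})\bar{\mathbf{V}}^H(\bar{\mathbf{V}}\bar{\mathbf{V}}^H+\bar{\mathbf{Y}})^{-1}=\bar{\mathbf{V}}^H\bar{\mathbf{Y}}^{-1}(\bar{\mathbf{Y}}+\bar{\mathbf{V}}\bar{\mathbf{V}}^H)(\bar{\mathbf{V}}\bar{\mathbf{V}}^H+\bar{\mathbf{Y}})^{-1}=\bar{\mathbf{V}}^H\bar{\mathbf{Y}}^{-1}$, and writing $\bar{\mathbf{R}}=\bar{\mathbf{V}}\bar{\mathbf{V}}^H+\bar{\mathbf{Y}}$ gives $\bar{\mathbf{U}}^H\bar{\mathbf{W}}\bar{\mathbf{U}}=\bar{\mathbf{R}}^{-1}\bar{\mathbf{V}}\bar{\mathbf{V}}^H\bar{\mathbf{Y}}^{-1}=\bar{\mathbf{R}}^{-1}(\bar{\mathbf{R}}-\bar{\mathbf{Y}})\bar{\mathbf{Y}}^{-1}=\bar{\mathbf{Y}}^{-1}-\bar{\mathbf{R}}^{-1}$, which is Hermitian, so the conjugate transpose in the last trace of \eqref{lower-bound} is immaterial; likewise $-\mathrm{Tr}(\bar{\mathbf{W}})+M=-\mathrm{Tr}(\bar{\mathbf{V}}\bar{\mathbf{V}}^H\bar{\mathbf{Y}}^{-1})$ recovers the second term, and tightness at $(\bar{\mathbf{V}},\bar{\mathbf{Y}})$ holds since both inequalities in your chain are then equalities. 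One presentational point: the inequality $\mathrm{Tr}(\bar{\mathbf{W}}\mathbf{E}_{\mathrm{mmse}})\leq\mathrm{Tr}(\bar{\mathbf{W}}\mathbf{E}(\bar{\mathbf{U}}))$ needs $\bar{\mathbf{W}}\succeq\mathbf{0}$ together with $\mathbf{E}(\bar{\mathbf{U}})-\mathbf{E}_{\mathrm{mmse}}\succeq\mathbf{0}$ (trace of a product of two PSD matrices is nonnegative); you gesture at this when you say the sign of the quadratic term is fixed, but it is worth stating explicitly since it is the one place positive semidefiniteness of the weight is actually used.
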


\begin{corollary} \label{theo-1}
A concave lower-bound 
$\hat{r}_{lk}$ and $\hat{r}_{l\bar{k}}$ for, respectively, $r_{lk}$ and $r_{l\bar{k}}$ can be found as
\begin{align*}
\nonumber
\hat{r}_{lk}&=r_{lk}^{(t-1)}
-\frac{1}{2\ln 2}\text{{\em Tr}}\left(
\bar{\mathbf{V}}_{lk}\bar{\mathbf{V}}_{lk}^T\bar{\mathbf{Y}}_{lk}^{-1}
\right)
\\ \nonumber
&\,\,\,+
\frac{1}{\ln 2}
\text{{\em Tr}}\left(
\bar{\mathbf{V}_{lk}}^T\bar{\mathbf{Y}_{lk}}^{-1}
\underline{{\bf H}}_{lk,l}(\{\bm{\Theta}\}) {\bf P}_{lk}^{1/2^{(t)}}
\right)
\\
&\hspace{.1cm}-
\frac{1}{2\ln 2}
\text{{\em Tr}}\left(
(\bar{\mathbf{Y}}^{-1}_{lk}-(\bar{\mathbf{V}}_{lk}\bar{\mathbf{V}}^T_{lk} + \bar{\mathbf{Y}}_{lk})^{-1})^T
\left(
\mathbf{Y}_{lk\rightarrow\bar{k}}
\right)
\right)
\\
\hat{r}_{l\bar{k}}&=\min\left( \hat{r}_{l\bar{k}}^l,\hat{r}_{lk\rightarrow\bar{k}}^l\right),
\end{align*}
where $r_{lk}^{(t-1)}=r_{lk}\left(\{\mathbf{P}^{(t)}\},\{\bm{\Theta}^{(t-1)}\}\right)$. 
Moreover, $\hat{r}_{l\bar{k}}^l$ and $\hat{r}_{lk\rightarrow\bar{k}}^l$ are, respectively, given by \eqref{eq=33=} and \eqref{eq=34=}, on the top of this page, where 
$\bar{r}_{l\bar{k}}^{(t-1)}=\bar{r}_{l\bar{k}}\left(\{\mathbf{P}^{(t)}\},
 \{\bm{\Theta}^{(t-1)}\}\right)$ and 
 $\bar{r}_{lk\rightarrow\bar{k}}^{(t-1)} = \bar{r}_{lk\rightarrow\bar{k}}\left(\{\mathbf{P}^{(t)}\},
 \{\bm{\Theta}^{(t-1)}\}\right) $.
Furthermore, the other parameters are defined as 
$\bar{\mathbf{V}}_{lk}=\underline{\mathbf{H}}_{lk,l}\left(\{\bm{\Theta}^{(t-1)}\}\right)\mathbf{P}_{lk}^{(t)^{1/2}}$, 
$\bar{\mathbf{Y}}_{lk}=\mathbf{D}_{lk}\left(\{\mathbf{P}^{(t)}\},\{\bm{\Theta}^{(t-1)}\}\right)$, 
and 
\begin{align*}
\mathbf{V}_{lk\rightarrow\bar{k}}&=\underline{\mathbf{H}}_{lk,l}\left(\{\bm{\Theta}^{(t-1)}\}\right)\mathbf{P}_{l\bar{k}}^{1/2},\\
\mathbf{Y}_{lk\rightarrow\bar{k}}&=\mathbf{D}_{lk}\left(\{\mathbf{P}^{(t)}\},\{\bm{\Theta}\}\right)
+
 \underline{{\bf H}}_{lk,l}(\{\bm{\Theta}\}) {\bf P}_{lk}^{(t)}\underline{{\bf H}}_{lk,l}^T(\{\bm{\Theta}\}) , 
\\
\bar{\mathbf{Y}}_{lk\rightarrow\bar{k}}&=\mathbf{D}_{lk}\left(\{\mathbf{P}^{(t)}\},\{\bm{\Theta}^{(t-1)}\}\right)
\\&+
 \underline{{\bf H}}_{lk,l}(\{\bm{\Theta}^{(t-1)}\}) {\bf P}_{lk}^{(t)}\underline{{\bf H}}_{lk,l}^T(\{\bm{\Theta}^{(t-1)}\}).
\end{align*}
\end{corollary}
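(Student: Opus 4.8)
The plan is to recognize each per-user rate as a single log-det term of the form $\tfrac12\log_2\big|\mathbf{I}+\mathbf{V}\mathbf{V}^T\mathbf{Y}^{-1}\big|$ and then invoke Lemma~\ref{lem-2} verbatim. For $r_{lk}$ in \eqref{eq-28} the natural identification is $\mathbf{V}=\underline{\mathbf{H}}_{lk,l}(\{\bm{\Theta}\})(\mathbf{P}_{lk}^{(t)})^{1/2}$ and $\mathbf{Y}=\mathbf{D}_{lk}$, so that $\mathbf{V}\mathbf{V}^T+\mathbf{Y}=\mathbf{Y}_{lk\rightarrow\bar{k}}$; for $\bar{r}_{l\bar{k}}$ in \eqref{eq-22} we take $\mathbf{V}=\underline{\mathbf{H}}_{l\bar{k},l}(\mathbf{P}_{l\bar{k}}^{(t)})^{1/2}$, $\mathbf{Y}=\mathbf{D}_{l\bar{k}}$; and for $\bar{r}_{lk\rightarrow\bar{k}}$ in \eqref{eq-25} we take $\mathbf{V}=\underline{\mathbf{H}}_{lk,l}(\mathbf{P}_{l\bar{k}}^{(t)})^{1/2}$, $\mathbf{Y}=\mathbf{D}_{lk}+\underline{\mathbf{H}}_{lk,l}\mathbf{P}_{lk}^{(t)}\underline{\mathbf{H}}_{lk,l}^T=\mathbf{Y}_{lk\rightarrow\bar{k}}$. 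Since every matrix here is real, the Hermitian transposes in Lemma~\ref{lem-2} reduce to ordinary transposes and $\mathfrak{R}\{\cdot\}$ becomes redundant; anchoring the barred quantities $\bar{\mathbf{V}},\bar{\mathbf{Y}}$ at the previous iterate $\{\bm{\Theta}^{(t-1)}\}$ and scaling the inequality by $1/(2\ln 2)$ reproduces term-by-term the four summands of $\hat{r}_{lk}$, $\hat{r}_{l\bar{k}}^l$ and $\hat{r}_{lk\rightarrow\bar{k}}^l$ in \eqref{eq=33=}--\eqref{eq=34=}, with the leading constant being exactly the rate at the anchor. This gives the lower-bound property, and I would also note that equality holds in Lemma~\ref{lem-2} at $\mathbf{V}=\bar{\mathbf{V}},\mathbf{Y}=\bar{\mathbf{Y}}$, so each surrogate is tight at $\{\bm{\Theta}^{(t-1)}\}$, which is what makes the MM iteration monotone.

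Next I would establish concavity in $\{\bm{\Theta}\}$. The structural fact, read off from \eqref{eq-ch} together with the diagonal form of $\bm{\Theta}_m$, is that each real-decomposed channel $\underline{\mathbf{H}}_{lk,i}(\{\bm{\Theta}\})$ is \emph{affine} in the reflecting coefficients. Consequently $\mathbf{V}$ is affine in $\{\bm{\Theta}\}$, so the term $\tfrac{1}{\ln 2}\text{Tr}(\bar{\mathbf{V}}^T\bar{\mathbf{Y}}^{-1}\mathbf{V})$ is affine, while the constant and $\bar{\mathbf{Y}}$-dependent terms are fixed. For the remaining trace, the coefficient matrix $\bar{\mathbf{Y}}^{-1}-(\bar{\mathbf{V}}\bar{\mathbf{V}}^T+\bar{\mathbf{Y}})^{-1}$ is positive semidefinite by operator monotonicity of the matrix inverse, since $\bar{\mathbf{Y}}\preceq\bar{\mathbf{V}}\bar{\mathbf{V}}^T+\bar{\mathbf{Y}}$. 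Moreover every elementary block $\underline{\mathbf{H}}\mathbf{P}\underline{\mathbf{H}}^T$ with $\mathbf{P}\succeq\mathbf{0}$ equals $(\underline{\mathbf{H}}\mathbf{P}^{1/2})(\underline{\mathbf{H}}\mathbf{P}^{1/2})^T$, which is matrix-convex as a function of an affine argument; hence the aggregate $\mathbf{V}\mathbf{V}^T+\mathbf{Y}=\mathbf{Y}_{lk\rightarrow\bar{k}}$ is matrix-convex in $\{\bm{\Theta}\}$. Composing this matrix-convex map with the linear functional $\text{Tr}(\mathbf{M}\,\cdot\,)$ for $\mathbf{M}\succeq\mathbf{0}$ yields a convex scalar, so its negative is concave; summing the affine and concave pieces shows $\hat{r}_{lk}$, $\hat{r}_{l\bar{k}}^l$ and $\hat{r}_{lk\rightarrow\bar{k}}^l$ are each concave in $\{\bm{\Theta}\}$.

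Finally, for the paired user the true rate is $r_{l\bar{k}}=\min(\bar{r}_{l\bar{k}},\bar{r}_{lk\rightarrow\bar{k}})$ by \eqref{eq-12}, and because the pointwise minimum of concave functions is concave, $\hat{r}_{l\bar{k}}=\min(\hat{r}_{l\bar{k}}^l,\hat{r}_{lk\rightarrow\bar{k}}^l)$ is a concave lower bound, as claimed. I expect the one genuinely delicate step — the one I would state most carefully — to be the last trace term: one must invoke matrix (operator) convexity of $\mathbf{A}\mapsto\mathbf{A}\mathbf{P}\mathbf{A}^T$ together with operator monotonicity of inversion, rather than mere scalar convexity, precisely because $\mathbf{Y}$ is quadratic (not affine) in $\{\bm{\Theta}\}$ and the PSD-ness of $\mathbf{M}$ is what flips convexity into the desired concavity. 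Everything else is a direct, if notation-heavy, specialization of Lemma~\ref{lem-2} to real matrices. Note that only concavity of the surrogate \emph{functions} is asserted here; the non-convexity of the feasibility sets $\mathcal{T}_I$ and $\mathcal{T}_C$ is addressed separately in the subsequent treatment of each set.
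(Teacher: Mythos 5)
Your proposal is correct and its core is the same as the paper's: the paper's entire proof of Corollary~\ref{theo-1} is the observation that substituting the stated parameters into Lemma~\ref{lem-2} yields the bounds, with Hermitian transposes replaced by ordinary transposes and the operator $\mathfrak{R}\{\cdot\}$ dropped because the real-decomposition method makes all quantities real. Your identifications of $(\mathbf{V},\mathbf{Y})$ for $r_{lk}$, $\bar{r}_{l\bar{k}}$ and $\bar{r}_{lk\rightarrow\bar{k}}$, the $1/(2\ln 2)$ scaling from $\ln$ to $\tfrac{1}{2}\log_2$, and the treatment of the paired user via the pointwise minimum all match what the corollary statement encodes. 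Where you go beyond the paper is in actually verifying the word ``concave'': the paper asserts concavity of the surrogates in $\{\bm{\Theta}\}$ without argument, whereas you supply the needed chain --- affinity of $\underline{\mathbf{H}}_{lk,i}(\{\bm{\Theta}\})$ in the reflecting coefficients, operator convexity of $\mathbf{X}\mapsto\mathbf{X}\mathbf{X}^T$ (hence of each block $\underline{\mathbf{H}}\mathbf{P}\underline{\mathbf{H}}^T$ as a function of $\{\bm{\Theta}\}$), positive semidefiniteness of $\bar{\mathbf{Y}}^{-1}-(\bar{\mathbf{V}}\bar{\mathbf{V}}^T+\bar{\mathbf{Y}})^{-1}$ via operator antitonicity of matrix inversion, and closure of concavity under pointwise minima. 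That chain is precisely what makes the surrogate problems \eqref{opt-prob-wmrm-rc-u}, \eqref{opt-prob-wmrm-rc-i} and \eqref{opt-prob-wmrm-rc-i-2} convex, so your writeup is a strict refinement of, not a departure from, the paper's proof; your remark that equality holds at $\{\bm{\Theta}^{(t-1)}\}$ likewise supplies the tightness property underlying the monotone-convergence claim the paper makes for its MM iterations.
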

\begin{proof}
Substituting the corresponding parameters in Lemma \ref{lem-2} results in the lower bounds. 
Note that since we employ the real-decomposition method, we can substitute the Hermitian operator with the transpose operator. Additionally, we can remove operator $\mathfrak{R}$ because all the variables are real in the real-decomposition method.
\end{proof}

\subsubsection{Feasibility set $\mathcal{T}_{U}$} Since $\mathcal{T}_{U}$ is a convex set, the following 
problem is convex: 
\setcounter{equation}{34}
\begin{align}\label{opt-prob-wmrm-rc-u}
\underset{
r,\{\bm{\Theta}\}\in\mathcal{T}_U}
{\max} \hspace{0.2cm}&r,
&
\text{s.t.}\hspace{0.4cm}& \lambda_{lk}\hat{r}_{lk}\geq r, &\forall l,k.
\end{align}
Hence, 
we can solve \eqref{opt-prob-wmrm-rc-u} efficiently by existing convex optimization tools. 
Note that the algorithm for this feasibility set converges to a stationary point of \eqref{opt-prob-wmrm} since the surrogate functions fulfill the three conditions in \cite[Sec. III]{soleymani2020improper}. Our solution for $\mathcal{T}_U$ is summarized in Algorithm I. 
\begin{table}[htb]
\label{alg-wsrm}
\begin{tabular}{l}
\hline 
 \small{\textbf{Algorithm I} NOMA-based IGS algorithm in Sec. \ref{sec-ii} with $\mathcal{T}_U$.}\\
\hline 
\hspace{0.2cm}\small{\textbf{Initialization}}\\
\hspace{0.2cm}\small{Set $\epsilon$, 
$t=1$, 
 $\{\mathbf{P}\}=\{\mathbf{P}^{(0)}\}$, and$\{\bm{\Theta}\}=\{\bm{\Theta}^{(0)}\}$ }\\
\hline 
\hspace{0.2cm}
\small{\textbf{While} $\left(\underset{\forall l,k}{\min}\,\lambda_{lk}r^{(t)}_{lk}-\underset{\forall l,k}{\min}\,\lambda_{lk}r^{(t-1)}_{lk}\right)/\underset{\forall l,k}{\min}\,\lambda_{lk}r^{(t-1)}_{lk}\geq\epsilon$ }\\ 
\hspace{.6cm}\small{{\bf Optimizing over} $\{\mathbf{P}\}$ {\bf by fixing} $\{\bm{\Theta}^{(t-1)}\}$}\\
\hspace{1.2cm}\small{Obtain $\tilde{r}_{lk}^{(t-1)}$ 
based on \eqref{l-r-lk}-\eqref{l-r-lk-bar}}\\ 
\hspace{1.2cm}\small{Compute $\{\mathbf{P}^{(t)}\}$ by solving \eqref{opt-prob-wmrm-3}}\\
\hspace{.6cm}\small{{\bf Optimizing over} $\{\bm{\Theta}\}$ {\bf by fixing} $\{\mathbf{P}^{(t-1)}\}$}\\
\hspace{1.2cm}\small{Obtain $\hat{r}_{lk}^{(t-1)}$ 
based on Corollary 1}\\
\hspace{1.2cm}\small{Compute $\{\bm{\Theta}^{(t)}\}$ by solving \eqref{opt-prob-wmrm-rc-u}}\\
\hspace{.6cm}\small{$t=t+1$}\\
\hspace{0.2cm}\small{\textbf{End (While)}}\\
\hspace{0.2cm}\small{{\bf Return} $\{\mathbf{P}^{\star}\}$ and $\{\bm{\Theta}^{\star}\}$.}\\
\hline 
\end{tabular}
\end{table}
\subsubsection{Feasibility set $\mathcal{T}_{I}$}
Unfortunately, $\mathcal{T}_{I}$ is not a convex set, which makes the resulting surrogate problem non-convex.
To address this issue, we rewrite the constraint $|\theta_{m_n}|=1$ as the two following constraints
\begin{align}\label{eq-43}
|\theta_{mn}|^2&\leq 1\\
|\theta_{mn}|^2&\geq 1,\label{eq-44}
\end{align}
for all $m,n$.
The constraint \eqref{eq-43} is a convex constraint, but we have to approximate the constraint \eqref{eq-44}. 
To this end, we approximate \eqref{eq-44} by the following affine lower bound 
\begin{equation}\label{eq=low}
|\theta_{mn}|^2\geq |\theta_{mn}^{(t-1)}|^2+2\mathfrak{R}\left(\theta_{mn}^{(t-1)}(\theta_{mn}-\theta_{mn}^{(t-1)})^*\right).
\end{equation}
Thus, a surrogate function for the constraint \eqref{eq-44} can be written as 
\begin{equation}\label{eq-50}
|\theta_{mn}^{(t-1)}|^2+2\mathfrak{R}\left(\theta_{mn}^{(t-1)}(\theta_{mn}-\theta_{mn}^{(t-1)})^*\right)\geq 1.
\end{equation}
To speed up convergence, we introduce a positive parameter, $\epsilon$, and rewrite \eqref{eq-50} as
\begin{equation}\label{eq-51}
|\theta_{mn}^{(t-1)}|^2+2\mathfrak{R}\left(\theta_{mn}^{(t-1)}(\theta_{mn}-\theta_{mn}^{(t-1)})^*\right)\geq 1-\epsilon.
\end{equation}
The inequality in \eqref{eq-51} is a linear constraint, which makes the following problem  convex:
\begin{subequations}\label{opt-prob-wmrm-rc-i}
\begin{align}
\underset{
r,\{\bm{\Theta}\}}
{\max} \hspace{0.2cm}&r,
&
\text{s.t.}\hspace{0.4cm}& \lambda_{lk}\hat{r}_{lk}\geq r, &\forall l,k,
\\
&&& \eqref{eq-51}, \eqref{eq-43}.
\end{align}
\end{subequations}
The optimization problem \eqref{opt-prob-wmrm-rc-i} can be efficiently solved. 
Let us represent the normalized solution of  \eqref{opt-prob-wmrm-rc-i} by  $\{\hat{\bm{\Theta}}\}$. 
We update the reflecting coefficients for  the feasibility set $\mathcal{T}_{I}$ as
\begin{equation}\label{eq-42}
\{\bm{\Theta}^{(t)}\}=
\left\{
\begin{array}{lcl}
\{\hat{\bm{\Theta}}\}&\text{if}&
\underset{\forall l,k}{\min}
\left\{
r_{lk}\left(
\{\mathbf{P}^{(t)}\},\{\hat{\bm{\Theta}}\}
\right)
\right\}
\geq
\\&&
\underset{\forall l,k}{\min}
\left\{
r_{lk}\left(
\{\mathbf{P}^{(t)}\},\{\bm{\Theta}^{(t-1)}\}
\right)
\right\}
\\
\{\bm{\Theta}^{(t-1)}\}&&\text{Otherwise}.
\end{array}
\right.
\end{equation}
This updating rule ensures that the algorithm generates a sequence of non-decreasing minimum weighted rates, which guarantees its convergence. 
Note that although our proposed algorithm converges, we do not make any claim on its optimality.
\subsubsection{Feasibility set $\mathcal{T}_{C}$} The feasibility set  $\mathcal{T}_{C}$ is not  convex. To convexify this feasibility set, we relax the relationship between the phase and amplitude of reflecting components and approximate \eqref{eq=9} as 
\begin{equation}\label{eq=9-2}
|\theta|_{\min}\leq |\theta_{mi}|\leq 1\,\,\,\forall m,i.
\end{equation}
The constraint $|\theta_{mi}|\leq 1$ is convex since it can be written as $|\theta_{mi}|^2\leq 1$. However, we also have the constraint $|\theta_{mi}|\geq |\theta|_{\min}$, or equivalently $|\theta_{mi}|^2\geq |\theta|_{\min}^2$, which is similar to the constraint \eqref{eq-44}.
Thus, we can apply a similar procedure and employ the lower bound in \eqref{eq=low}, which yields
\begin{equation}\label{eq-50-2}
|\theta_{mn}^{(t-1)}|^2+2\mathfrak{R}\left(\theta_{mn}^{(t-1)}(\theta_{mn}-\theta_{mn}^{(t-1)})^*\right)\geq |\theta|_{\min}^2.
\end{equation}
Hence, the surrogate optimization problem is 
\begin{subequations}\label{opt-prob-wmrm-rc-i-2}
\begin{align}
\underset{
r,\{\bm{\Theta}\}}
{\max} \hspace{0.2cm}&r,
&
\text{s.t.}\hspace{0.4cm}& \lambda_{lk}\hat{r}_{lk}\geq r, &\forall l,k,
\\
&&& \eqref{eq-43}, \eqref{eq-50-2}.
\end{align}
\end{subequations}
Let us call the solution of \eqref{opt-prob-wmrm-rc-i-2} $\{\bm{\Theta}^{\star}\}$. We take the phases of  $\{\bm{\Theta}^{\star}\}$ to generate feasible reflecting coefficients by choosing 
\begin{equation}
\{\hat{\bm{\Theta}}\}=f(\angle\{\bm{\Theta}^{\star}\}).
\end{equation}
To make the solution of the algorithm non-decreasing, we update the reflecting coefficients  as \eqref{eq-42}.
Since the algorithm generates a sequence of non-decreasing minimum-weighted rates,  the convergence is guaranteed. However, the proposed algorithm for $\mathcal{T}_{C}$ does not necessarily converge to a stationary point of \eqref{opt-prob-wmrm}.

\subsection{Discussion on computational complexity} \label{sec=com=}
Since our proposed algorithm to solve the minimum-weighted-rate maximization (MWRM) problem is iterative, its computational complexity depends on the total number of iterations and the complexities involved with obtaining a solution in each iteration.
Note that such computational complexities highly depend on the implementation of our algorithms. In this subsection, we provide a brief discussion on the computational complexity of the proposed algorithm and provide an approximation for the upper bound of the number of multiplications to obtain a solution of our algorithm. 

Our proposed algorithm is iterative, and each iteration consists of two steps. In the first step, we obtain the transmit covariance matrices by solving \eqref{opt-prob-wmrm-3}, which is a convex optimization problem. According to \cite[Chapter 11]{boyd2004convex}, the total number of Newton steps to solve a convex optimization problem grows with the square root of the number of inequalities in the problem. There are $2KL$ inequalities in \eqref{opt-prob-wmrm-3}. To compute each lower bound in \eqref{l-r-lk}, there are approximately $8+24LN_{BS}^2$ multiplications. Moreover, there are $2LK$ lower bounds for the rates since the total number of users is $2KL$. 
Thus, the complexities involved with solving  \eqref{opt-prob-wmrm-3} can be approximated as $\mathcal{O}\left(\sqrt{KL}\left(L^2KN_{BS}^2\right)\right)$. 
In the second step, we update the RIS components. We consider three different feasibility sets for the RIS components. Here, we present the computational complexities to run our proposed algorithm only for the feasibility $\mathcal{T}_{I}$ due to a space restriction. It is very straightforward to repeat the analysis for the other feasibility sets.
To update the RIS components for  $\mathcal{T}_{I}$, we solve the convex optimization problem \eqref{opt-prob-wmrm-rc-i}. The number of constraints in \eqref{opt-prob-wmrm-rc-i} is $2LK+2MN_{RIS}$. Additionally, the approximate number of multiplications to obtain each lower bound $\hat{r}_{lk}$ can be approximated as $\mathcal{O}\left(N^2_{BS}L+MN_{RIS}N_{BS}\right)$. Thus, the computational complexity to solve \eqref{opt-prob-wmrm-rc-i} can be approximated as $\mathcal{O}\left(LK\sqrt{KL+MN_{RIS}}\left(N^2_{BS}L+MN_{RIS}N_{BS}\right)\right)$.
We set the maximum number of iterations to $N$. 
Hence, the total computational complexity of our proposed algorithm can be approximated as
\begin{multline*}
 \mathcal{O}\left(N\left(\sqrt{KL}\left(L^2KN_{BS}^2\right)\right.\right.\\
 +\left.\left. LK\sqrt{KL+MN_{RIS}}\left(N^2_{BS}L+MN_{RIS}N_{BS}\right)\right)\right).
 \end{multline*}

\section{Maximizing the Minimum Weighted EE}\label{EE-sec}
In this section, we solve the minimum-weighted-EE maximization (MWEEM) problem, i.e.,  
\begin{subequations}\label{ee-opt}
\begin{align}
\max_{\{\bm{\Theta}\}\in\mathcal{T},\{\mathbf{P}\}\in\mathcal{P},e}\,\,& e &
\text{s.t.}\hspace{.5cm}&\lambda_{lk} EE_{lk}\geq  e,&\forall l,k
\\
&&& r_{lk}\geq r^{th}_{lk}, &\forall l,k,
\label{qos-55-b}
\end{align}
\end{subequations}
where the constraint \eqref{qos-55-b} 
is related to the rate constraint for all users, and $ r^{th}_{lk}$ is the corresponding minimum required rate for the user. Similar to the MWRM problem, we can characterize the EE region by employing the EE profile technique. That is, we solve \eqref{ee-opt} and vary the weights for all possible $\lambda_{lk}$, satisfying $\sum_{\forall l}\sum_{\forall k}\lambda_{lk}^{-1}=1$ \cite{Sole1909:Energy}.
Moreover, similar to the MWRM problem, the non-negative weights $\lambda_{lk}$ can be chosen based on the priorities of the users \cite{soleymani2022improper}. 
This problem is not convex, and is even more complicated than the MWRM problem since EE functions have fractional structures in $\{\mathbf{P}\}$, and \eqref{ee-opt} is actually a multiple-fractional problem. 
To solve \eqref{ee-opt}, we employ an AO approach similar to the proposed scheme in Section \ref{sec-ii}. 
That is, we first fix the reflecting coefficient matrices and solve \eqref{ee-opt} by optimizing over the covariance matrices. 
To this end, we employ MM alongside with the generalized Dinkelbach algorithm (GDA). 
Note that GDA is an iterative algorithm to solve fractional programming problems. 
After updating the covariance matrices, we fix them and solve \eqref{ee-opt} over the reflecting-coefficient matrices. 
The resulting problem has a structure similar to \eqref{opt-prob-wmrm-rc}, and hence we can employ a similar approach to solve it. However, due to the rate constraint and the format of EE functions, we have to modify the constraints.

\subsection{Optimization of the transmit covariance matrices}
In this subsection,  we solve \eqref{ee-opt} for a fixed $\{\bm{\Theta}^{(t-1)}\}$, i.e.,   
\begin{subequations}\label{ee-opt-2}
\begin{align}
\max_{\{\mathbf{P}\}\in\mathcal{P},e}\,\,& e &
\text{s.t.}\hspace{.5cm}&\lambda_{lk}EE_{lk}\geq e,&\forall l,k
\\
&&& r_{lk}\geq r^{th}_{lk}, &\forall l,k.
\end{align}
\end{subequations}
This problem is not convex, and we employ MM to solve  it. That is, we first approximate the rates by the concave lower bounds in \eqref{l-r-lk}-\eqref{l-r-lk-bar}, which results in
\begin{subequations}\label{ee-opt-3}
\begin{align}
\max_{\{\mathbf{P}\}\in\mathcal{P},e}\,\,& e &
\text{s.t.}\hspace{.5cm}&\frac{\lambda_{lk}\tilde{r}_{lk}}{P_c+\eta\text{Tr}\left(\mathbf{P}_{lk}\right)}\geq  e,&\forall l,k
\\
&&& \tilde{r}_{lk}\geq r^{th}_{lk}, &\forall l,k.
\end{align}
\end{subequations}
Although \eqref{ee-opt-3} is not convex, we can obtain its global optimal solution by the GDA. 
The GDA is an extension of  the Dinkelbach algorithm to solve multiple-ratio fractional programming  problems \cite{zappone2015energy}. The optimum solution of  \eqref{ee-opt-3} is obtained by iteratively solving
\begin{subequations}\label{ee-opt-4}
\begin{align}
\max_{\{\mathbf{P}\}\in\mathcal{P},e}\,\,& e 
\\
\text{s.t.}\hspace{.5cm}&
\lambda_{lk}\tilde{r}_{lk}-
\mu^{(t,n)}\left(P_c+\eta\text{Tr}\left(\mathbf{P}_{lk}\right)\right)\geq e,
&\forall l,k,
\\
& \tilde{r}_{lk}\geq r^{th}_{lk}, &\forall l,k,
\end{align}
\end{subequations}
and updating $\mu^{(t,n)}$ as
\begin{align}
\nonumber
\mu^{(t,n)}=&
\min\left\{
\underset{\forall l,k}{\min}
\left\{
\frac{\lambda_{lk}\tilde{r}_{lk}\left(\{\mathbf{P}^{(t,n-1)}\},\{\bm{\Theta}^{(t-1)}\}\right)}{
P_c+\eta\text{Tr}\left(\mathbf{P}_{lk}^{(t,n-1)}\right)
}
\right\},
\right.
\\&
\left.
\underset{\forall l,\bar{k}}{\min}
\left\{
\frac{\lambda_{lk}r_{l\bar{k}}^l\left(\{\mathbf{P}^{(t,n-1)}\},\{\bm{\Theta}^{(t-1)}\}\right)}{
P_c+\eta\text{Tr}\left(\mathbf{P}_{l\bar{k}}^{(t,n-1)}\right)
}
\right\}
\right\}.
\end{align}
The GDA linearly converges to the optimum solution of  \eqref{ee-opt-3}.

\begin{figure*}[t!]
    \centering
\includegraphics[width=.9\textwidth]{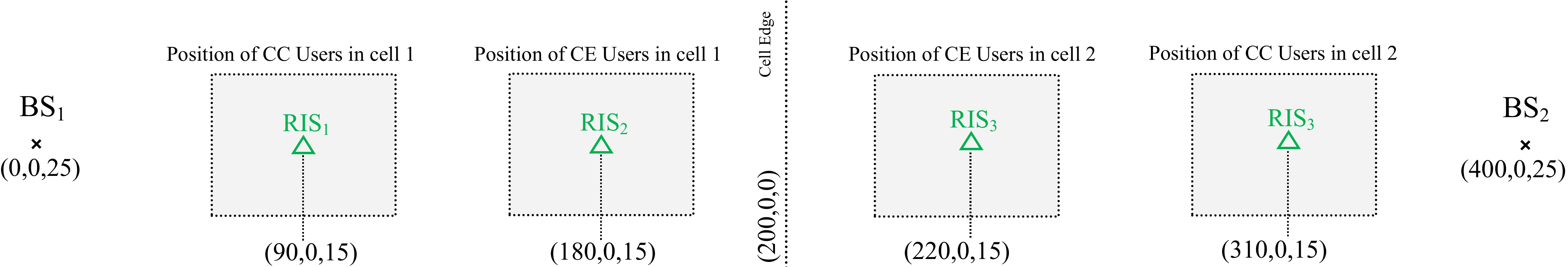}
     \caption{System topology.}
	\label{Fig-sys-model-3}
\end{figure*}

\subsection{Optimization of the reflecting coefficients}

In this subsection, we consider the optimization of the weighted-minimum-EE over reflecting coefficients for a fixed set of covariance matrices. 
The MWEEM problem for a fixed $\{\mathbf{P}^{(t)}\}$ can be simplified to 
\begin{align}\label{ee-opt-ref}
\max_{\{\bm{\Theta}\}\in\mathcal{T},e}\,\,& e &
\text{s.t.}\hspace{.5cm}&
r_{lk}
\geq 
\min\left(e\tilde{\lambda}_{lk},r^{th}_{lk}\right),&\forall l,k,
\end{align}
where $\tilde{\lambda}_{lk}=\left(
P_c+\eta\text{Tr}\left(\mathbf{P}_{lk}^{(t)}\right)
\right)/\lambda_{lk}$. 
The structure of this problem is similar to the MWRM problem in Section \ref{sec-orm}. 
Hence, we can easily employ the algorithms in Section \ref{sec-orm} to solve \eqref{ee-opt-ref}. 
To this end, we first substitute the rates by the lower bounds in Theorem \ref{theo-1}, which yields
\begin{align}\label{ee-opt-ref-2}
\max_{\{\bm{\Theta}\}\in\mathcal{T},e}\,\,& e &
\text{s.t.}\hspace{.5cm}&
\hat{r}_{lk}
\geq 
\min\left(\tilde{\lambda}_{lk} e,r^{th}_{lk}\right),&\forall l,k.
\end{align}
This optimization problem is very similar for the different feasibility sets and can be solved  similarly, so we skip the details due to limited space. We summarize our proposed algorithm for the MWEEM in Algorithm II.
\begin{table}[htb]
\label{alg-ii}
\begin{tabular}{l}
\hline 
 \small{\textbf{Algorithm II} NOMA-based IGS algorithm in Sec. \ref{EE-sec} with $\mathcal{T}$.}\\
\hline 
\hspace{0.2cm}\small{\textbf{Initialization}}\\
\hspace{0.2cm}\small{Set $\epsilon$, 
$t=1$, 
 $\{\mathbf{P}\}=\{\mathbf{P}^{(0)}\}$, and$\{\bm{\Theta}\}=\{\bm{\Theta}^{(0)}\}$ }\\
\hline 
\hspace{0.2cm}
\small{\textbf{While} $\left(\!\!\underset{\forall l,k}{\min}\,\lambda_{lk}EE^{(t)}_{lk}\!-\underset{\forall l,k}{\min}\,\lambda_{lk}EE^{(t-1)}_{lk}\!\!\right)\!\!/\underset{\forall l,k}{\min}\,\lambda_{lk}EE^{(t-1)}_{lk}\!\!\geq\!\epsilon$ }\\ 
\hspace{.6cm}\small{{\bf Optimizing over} $\{\mathbf{P}\}$ {\bf by fixing} $\{\bm{\Theta}^{(t-1)}\}$}\\
\hspace{1.2cm}\small{Obtain $\tilde{r}_{lk}^{(t-1)}$ 
based on \eqref{l-r-lk}-\eqref{l-r-lk-bar}}\\ 
\hspace{1.2cm}\small{Compute $\{\mathbf{P}^{(t)}\}$ by solving \eqref{ee-opt-3}}\\
\hspace{.6cm}\small{{\bf Optimizing over} $\{\bm{\Theta}\}$ {\bf by fixing} $\{\mathbf{P}^{(t-1)}\}$}\\
\hspace{1.2cm}\small{Obtain $\hat{r}_{lk}^{(t-1)}$ 
based on Corollary 1}\\
\hspace{1.2cm}\small{Compute $\{\bm{\Theta}^{(t)}\}$ by solving \eqref{ee-opt-ref-2}}\\
\hspace{.6cm}\small{$t=t+1$}\\
\hspace{0.2cm}\small{\textbf{End (While)}}\\
\hspace{0.2cm}\small{{\bf Return} $\{\mathbf{P}^{\star}\}$ and $\{\bm{\Theta}^{\star}\}$.}\\
\hline 
\end{tabular}
\end{table}

\subsection{Discussion on computational complexity}
The computation complexity of our proposed algorithm for WMEEM can be computed similar to the approach in Section \ref{sec=com=}. The computational complexities involved with updating the RIS components is similar to  Section \ref{sec=com=} since the optimization problems \eqref{opt-prob-wmrm-rc-u} and \eqref{ee-opt-ref-2} have a similar structure, and thus, their computational complexities are in the same order. 
However, updating the transmit covariance matrices has higher computational complexities comparing to the WMRM problem since \eqref{ee-opt-3} has a fractional structure, and we employ the GDA to solve it.  Indeed, we have to solve a sequence of convex optimization problems (\eqref{ee-opt-4}) to update the transmit covariance matrices. 
We set the maximum number of the GDA iterations $N_I$. 
The maximum number of Newton iterations to solve \eqref{ee-opt-4} is related to $\sqrt{4LK}$ since the number of constraints in \eqref{ee-opt-4} is $4LK$. The main complexity in solving each Newton iteration is to compute the lower bounds for the user rates, which approximately has $8+24LN_{BS}^2$ multiplications. Thus, the overall computational complexity of solving the WMEEM problem can be approximated as
\begin{multline*}
 \mathcal{O}\left(N\left(N_I\sqrt{KL}\left(L^2KN_{BS}^2\right)\right.\right.\\
 +\left.\left. LK\sqrt{KL+MN_{RIS}}\left(N^2_{BS}L+MN_{RIS}N_{BS}\right)\right)\right).
 \end{multline*}

\section{Numerical Results}\label{sec-v-nr}
 We consider a two-cell broadcast channel with $2K$ users as well as two RIS locations in each cell  as shown in Fig. \ref{Fig-sys-model-3}, unless it is explicitly mentioned otherwise. The dimensions in Fig. \ref{Fig-sys-model-3} are in meters.  We assume that the users are uniformly located into two different clusters/positions in each cell: a cell centric and a cell edge region \cite{tuan2019non}. In each cluster, the users are located in a $20$m$\times 20$m square with a RIS at its center. Note that the position of each user is modeled by a random variable with a uniform distribution and varies in each iteration of the Monte Carlo simulations.   
The heights of BSs, RISs and users are, respectively, 25m, 15m and 1.5m. 
It is worth emphasizing that we consider a specific system topology to accurately model the large-scale fading, which may have a great impact on the performance of RIS \cite{huang2019reconfigurable, kammoun2020asymptotic, pan2020multicell, yu2020joint}. 
We further assume that the power budget of each BS is equal to $P$ watts. Additionally, we assume that the weights of all users are equal to 1, i.e., $\lambda_{l,k}=1$ for all $l,k$.
The parameters for the feasibility set $\mathcal{T}_{C}$ are $|\theta|_{\min}=0.2$, $\alpha=1.6$, and $\phi=0.43\pi$ \cite{abeywickrama2020intelligent}.
We employ a random feasible initial point for running the proposed algorithms. 
We assume that the direct link between a BS and a user is non-line-of-sight (NLoS), which implies that the small-scale fading for a direct link is Rayleigh. It means that the channel coefficients for direct links are drawn from a zero-mean complex proper Gaussian distribution with unit variance times an attenuation coefficient, which models the large-scale fading. 
Furthermore, we assume that the links related to RISs are line-of-sight (LoS), which implies that their small-scale fading is Rician. Thus, the direct link between BS $i$ and RIS $m$, denoted by $\mathbf{G}_{mi}$, is modeled as
\cite{pan2020multicell}
\begin{equation*}
\mathbf{G}_{mi}=\sqrt{\beta_{mi}}\left(\sqrt{\frac{\gamma}{1+\gamma}}\mathbf{G}_{mi}^{\text{LoS}}+\sqrt{\frac{1}{1+\gamma}}\mathbf{G}_{mi}^{\text{NLoS}}\right),
\end{equation*}
where $\beta_{mi}$ is the channel-attenuation coefficient due to the large-scale fading,  $\gamma$ is the Rician factor, $\mathbf{G}^{\text{LoS}}_{mi}$ is the LoS component, and $\mathbf{G}^{\text{NLoS}}_{mi}$ is the NLoS Rayleigh component. We consider $\gamma=3$ for all links.
Although the NLoS component $\mathbf{G}^{\text{NLoS}}_{mi}$ is a random variable, the LoS component $\mathbf{G}^{\text{LoS}}_{mi}$ is a deterministic variable modeled as \cite{pan2020multicell}
\begin{equation*}
\mathbf{G}^{\text{LoS}}_{mi}=\mathbf{a}_{r_{mi}}\left(\phi^{A}_{mi}\right)\mathbf{a}^H_{t_{mi}}\left(\phi^{D}_{mi}\right),
\end{equation*} 
where $\phi^{A}_{mi}/\phi^{D}_{mi}\sim\text{Unif}[0,2\pi]$ is the angle of arrival/departure for the link between BS $i$ and RIS $m$, and \cite{pan2020multicell}
\begin{align*} 
\mathbf{a}_{r_{mi}}
&=\left[1,e^{j\frac{2\pi d\sin(\phi^{A}_{mi})}{\lambda}},\cdots,e^{j\frac{2(N_{RIS}-1)\pi d\sin(\phi^{A}_{mi})}{\lambda}}\right],
\\
\mathbf{a}_{t_{mi}}
&=\left[1,e^{j\frac{2\pi d\sin(\phi^{D}_{mi})}{\lambda}},\cdots,e^{j\frac{2(N_{BS}-1)\pi d\sin(\phi^{D}_{mi})}{\lambda}}\right].
\end{align*}
 Note that the difference between LoS and NLoS links is not only in the   small-scale fading, but also  they have different large-scale fading.  
The path loss in dB for each link can be modeled as \cite{yu2020joint, soleymani2022improper}
\begin{equation*}
\text{PL}=\text{PL}_0+G_0-10\alpha\log_{10}\left(\frac{d}{d_0}\right),
\end{equation*}
where $\text{PL}_0$ is the path loss at the reference distance $d_0$, $G_0$ is the transmitter antenna gain, $\alpha$ is the path-loss exponent, and $d$ is the distance between the transmitter and receiver. 
The effective channel attenuation coefficient is $\beta=10^{\text{PL}/10}$ \cite{ soleymani2022improper}. Note that the path-loss exponent $\alpha$ depends to a large extent on whether a link is LoS or NLoS, where NLoS links experience a larger attenuation.
The simulation parameters related to the large-scale and small-scale fading are the same as in \cite{soleymani2022improper}. Due to the space restriction, we refer the readers to \cite[Table II]{soleymani2022improper} for additional details on the rest of the parameters. Employing Monte Carlo simulations, we average the results over 100 channel realizations. 

In this paper, we compare the proposed scheme with the algorithms in \cite{soleymani2022improper}, which considers TIN. We also compare the proposed algorithms with the traditional systems, i.e., systems with no RIS. To summarize, the considered algorithms in the simulations are as follows:
\begin{itemize}
\item 
{\bf IR$_X$N} (or {\bf PR$_X$N}) refers to the proposed NOMA-based IGS (or PGS) scheme for $\mathcal{T}_X$, where X can be equal to U, I and C. 
\item 
{\bf IR$_R$N} (or {\bf PR$_R$N}) refers to the NOMA-based IGS (or PGS) scheme with random reflecting coefficients.

\item 
{\bf IR$_U$T} (or {\bf PR$_U$T}) refers to the proposed IGS (or PGS) scheme in \cite{soleymani2022improper} for the feasibility set $\mathcal{T}_U$ with TIN. Note that this scheme can be seen as an upper bound for the IGS (or PGS) performance with TIN.
\item 
{\bf IN} (or {\bf PN}) refers to the proposed NOMA-based IGS (or PGS) scheme, which is applied to traditional multicell BCs (i.e., without RIS).
\end{itemize}

\begin{figure*}[t!]
    \centering
    \begin{subfigure}[t]{0.32\textwidth}
        \centering
        \includegraphics[width=.92\textwidth]{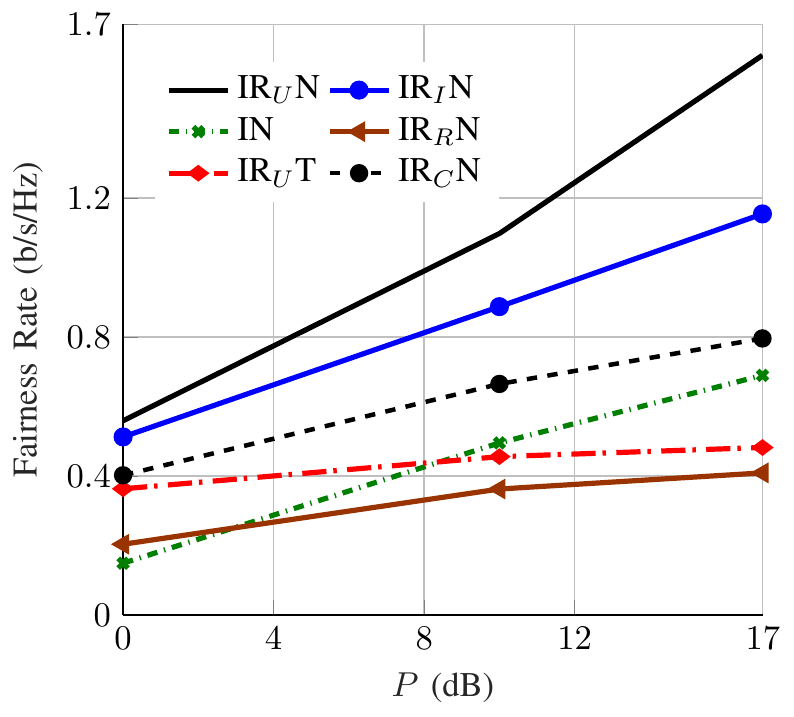}
        \caption{IGS schemes.}
    \end{subfigure}%
    ~ 
    \begin{subfigure}[t]{0.32\textwidth}
        \centering
\includegraphics[width=.92\textwidth]{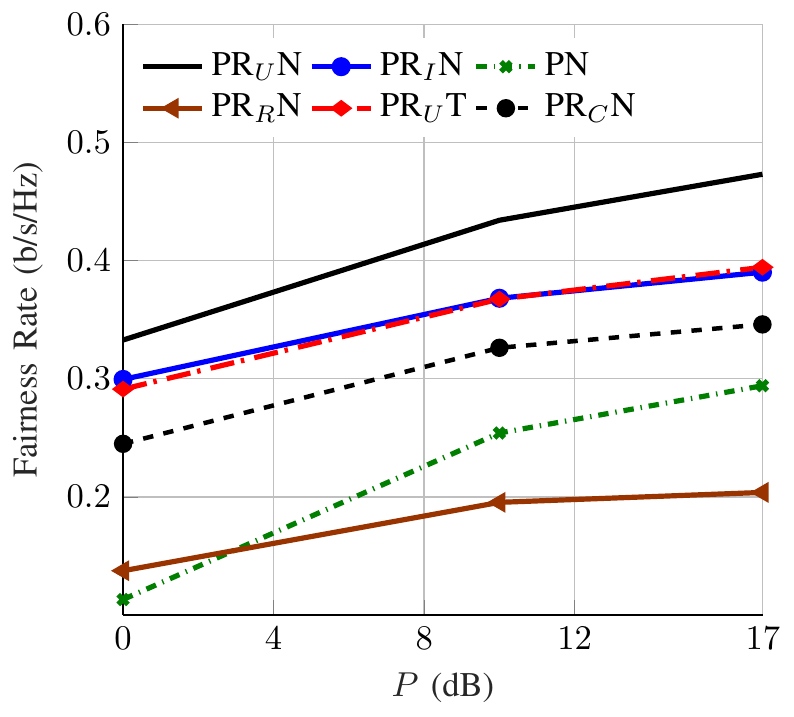}
        \caption{PGS schemes.}
    \end{subfigure}
	~
	\begin{subfigure}[t]{0.32\textwidth}
        \centering
\includegraphics[width=.92\textwidth]{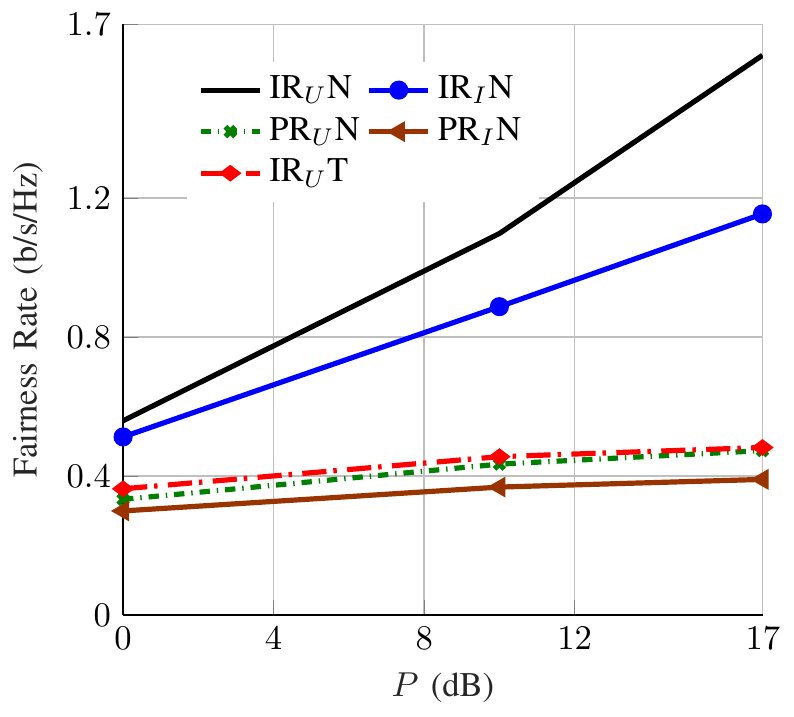}
        \caption{IGS and PGS schemes with RIS.}
    \end{subfigure}
    \caption{The average fairness rate versus $P$ for $N_{BS}=N_u=1$, $N_{RIS}=25$, $L=2$, $K=2$, $M=4$.}
	\label{Fig-2-nrs} 
\end{figure*}

\subsection{Fairness rate}
In this subsection, we consider the MWRM problem. We refer to the maximum of the minimum rate as a fairness rate since it is usually the case that all users receive an equal achievable rate when the minimum rate is maximized \cite{soleymani2022improper, park2013sinr}. To fully evaluate the system performance, we show the effect of different parameters on the performance of RIS-assisted systems. These parameters are power budget, transmit antennas at the BSs, number of users and number of RIS elements. Finally, we briefly illustrate the convergence of our proposed algorithms.
\begin{figure*}[t!]
    \centering
    \begin{subfigure}[t]{0.32\textwidth}
        \centering
       \includegraphics[width=.92\textwidth]{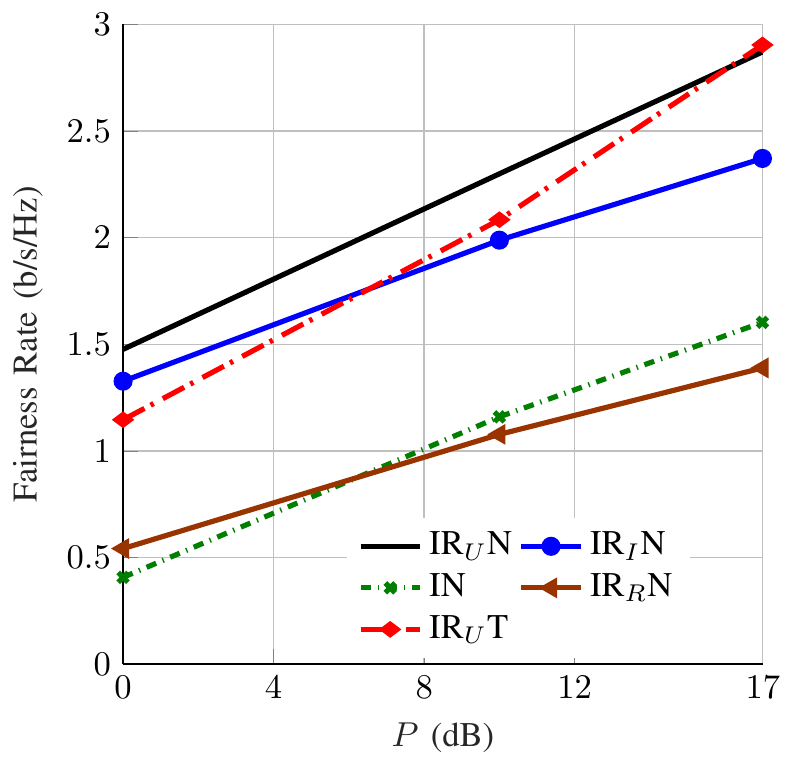}
        \caption{IGS schemes.}
    \end{subfigure}%
    ~ 
    \begin{subfigure}[t]{0.32\textwidth}
        \centering
\includegraphics[width=.92\textwidth]{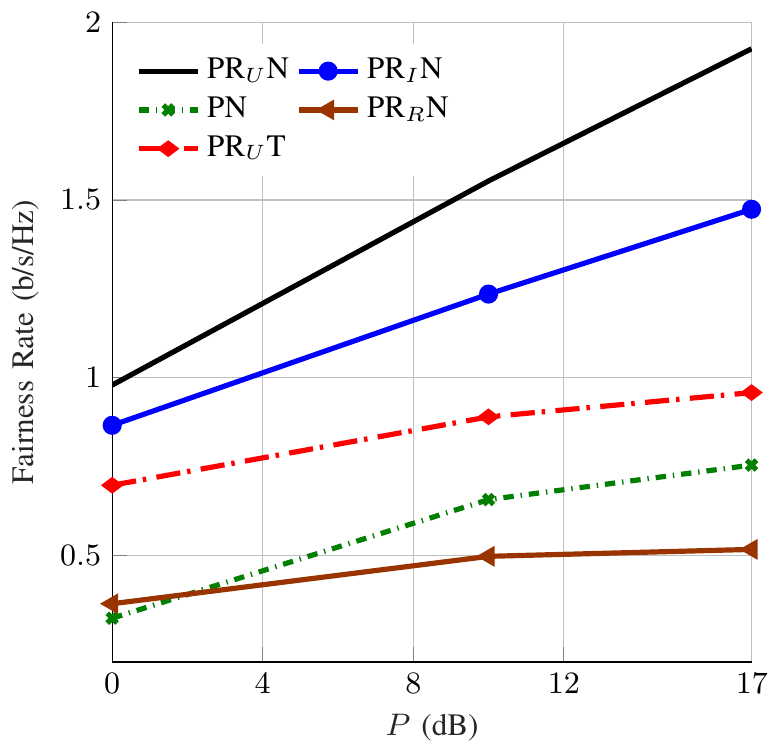}
        \caption{PGS schemes.}
    \end{subfigure}
	~
	\begin{subfigure}[t]{0.32\textwidth}
        \centering
\includegraphics[width=.92\textwidth]{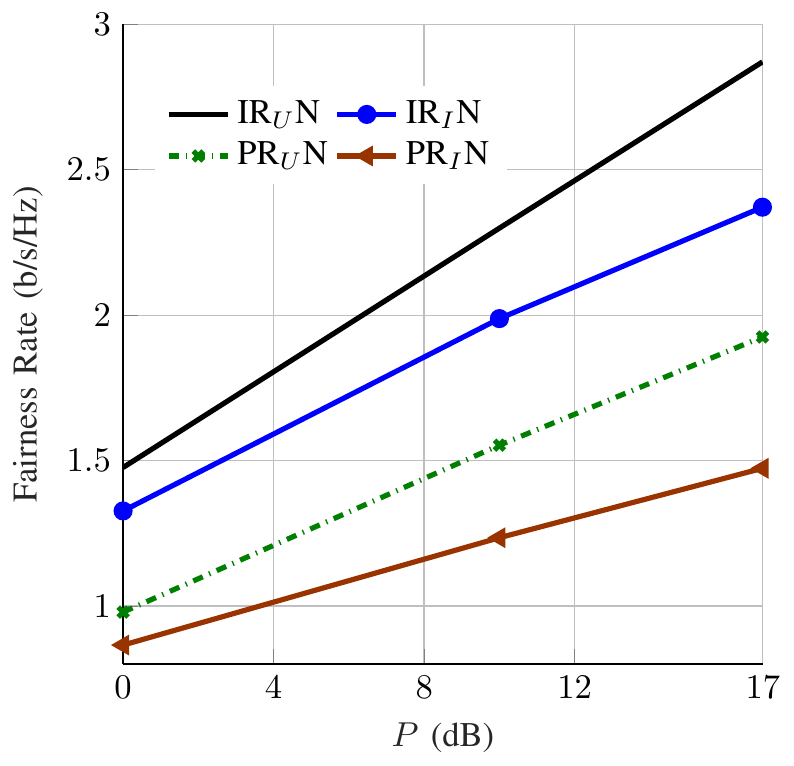}
        \caption{IGS and PGS schemes with RIS.}
    \end{subfigure}
    \caption{The average fairness rate versus $P$ for $N_{BS}=2$, $N_{RIS}=25$, $L=2$, $K=2$, $M=4$.}
	\label{Fig-3-srs} 
\end{figure*}

\subsubsection{Impact of power budget}
In Fig. \ref{Fig-2-nrs}, we show the effect of the BS  power budget on the minimum user rate. As can be observed, 
the RIS-assisted systems (both IGS and PGS) with NOMA outperform the RIS-assisted systems with TIN as well as the systems without RIS. 
However, the IGS (and PGS) schemes with random RIS phases perform even worse than the IGS (and PGS) schemes with no-RIS especially at high SNR regimes, which indicates the importance of optimizing RIS coefficients.
This is rather a surprising result, which shows that employing RIS without a proper optimization may even make the system performance worse.  
In Fig.  \ref{Fig-2-nrs}c, we compare the IGS and PGS schemes. As can be observed, the IGS scheme can significantly outperform the PGS schemes, and the benefits of IGS increase with  $P$. The reason is that the interference level increases with $P$, so more improvements are expected  by IGS as an interference-management technique. 
Another interesting behavior in Fig. \ref{Fig-2-nrs} is that IGS increases the benefits of NOMA. 
Indeed, it suggests that these interference-management techniques can be seen as a complementary of each other since NOMA handles the interference from the paired user (intracell), while IGS mainly handles ICI as well as the remaining intracell interference. Furthermore, we can observe that RIS increases the benefits of IGS.  

In Fig. \ref{Fig-2-nrs}, we also consider the impact of the feasibility set for the RIS components. As can be observed, RIS can improve the system performance with the three considered feasibility sets. Moreover, we observe that the amplitude of RIS components can have a considerable impact on the performance. As expected, RIS with the feasibility set $\mathcal{T}_U$ outperforms the other feasibility sets, which are a subset of $\mathcal{T}_U$.  

In Fig. \ref{Fig-3-srs}, we show the average fairness rate versus $P$ for $N_{BS}=2$, $N_{RIS}=25$, $L=2$, $K=2$, $M=4$. As can be observed, the overall performance is similar to the single-antenna systems. 
For instance, as shown in Fig. \ref{Fig-3-srs}c, the IGS scheme can still significantly outperform the PGS scheme. 
The main difference between Fig. \ref{Fig-3-srs} and Fig. \ref{Fig-2-nrs} is in the performance gap between the proposed NOMA scheme and the TIN scheme proposed in \cite{soleymani2022improper}, which decreases by increasing the number of transmit antennas. The reason is that by increasing the number of transmit antennas, we actually increase the number of spatial resources, which consequently,  reduces the interference level. It is expected that the benefits of interference-management techniques decreases whenever there is a low interference level, as will be corroborated in the next example. 
  
In Fig. \ref{Fig-3-srs}, we also observe an interesting behavior in the performance of NOMA. That is, IGS with TIN performs very close to IGS with NOMA at high SNR. This happens since the channels in IGS are not degraded, which means that NOMA can be suboptimal especially at high SNRs. This in turn implies that we should carefully choose the signaling scheme based on the operational regime. We elaborate more on this issue in this section.

\subsubsection{Impact of transmit antennas}
\begin{figure}[t!]
    \centering
\includegraphics[width=.35\textwidth]{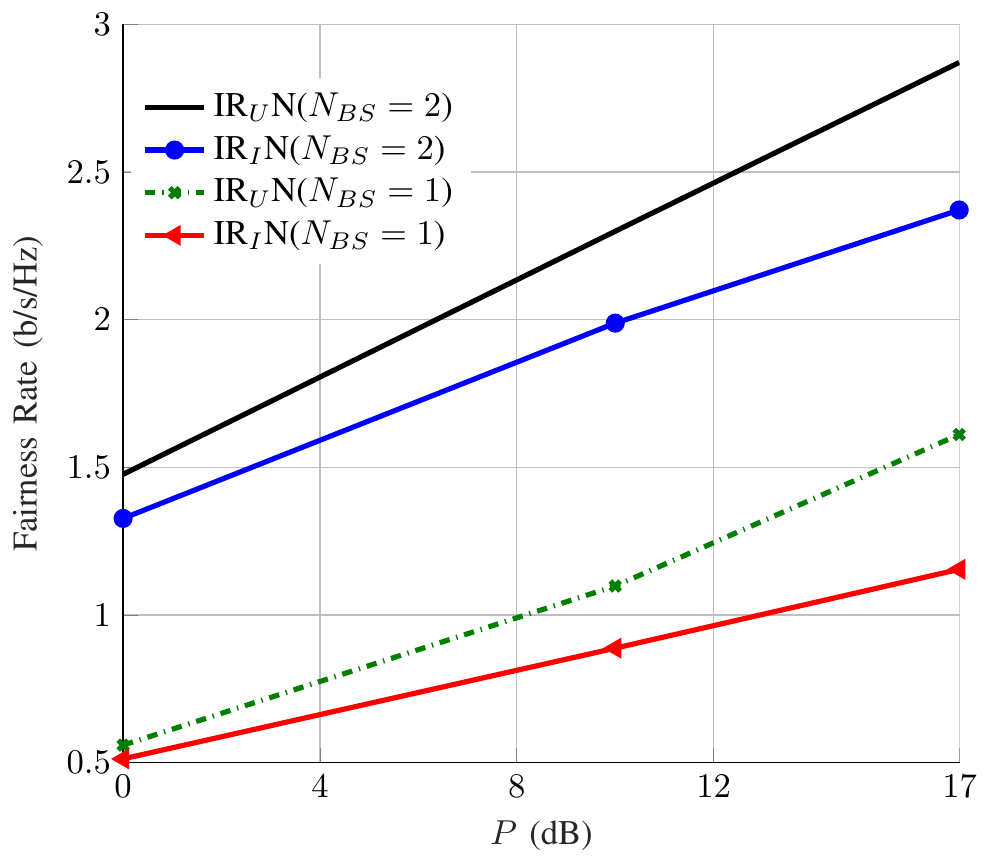}
    \caption{The average fairness rate versus $P$ for $N_u=1$, $N_{RIS}=25$, $L=2$, $K=2$, $M=4$.}
	\label{Fig-8-nrs} 
\end{figure}
Fig. \ref{Fig-8-nrs} shows the average fairness rate for IGS schemes versus $P$ for $N_{RIS}=25$, $L=2$, $K=2$, $M=4$ and different number of transmit antennas at the BSs.  
As can be observed, the average fairness rate significantly increases with the number of transmit antennas. Moreover, we observe a similar behavior for the IGS schemes with different feasibility sets. Hence, it can be expected that the performance gap between the upper bound performance of RIS remains relatively constant if we increase the number of transmit antennas. Furthermore, this gap increases with the power budget (or equivalently the transmission power) of the BS.

\begin{figure}[t!]
    \centering
\includegraphics[width=.35\textwidth]{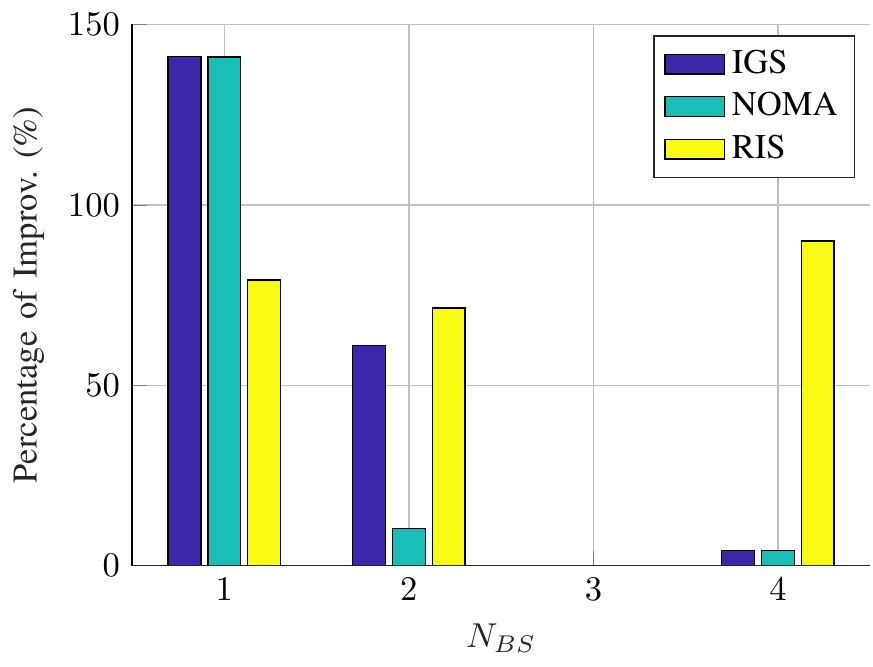}
    \caption{The performance improvement by IGS/NOMA/RIS versus $N_{BS}$ for $P=10$W, $N_{RIS}=25$, $L=2$, $K=2$, $M=4$.}
	\label{Fig-nbs-nrs} 
\end{figure}
In Fig. \ref{Fig-nbs-nrs}, we show the relative performance improvements in the average fairness rate by each technique  versus $N_{BS}$ for $P=10$W, $N_{RIS}=25$, $L=2$, $K=2$, $M=4$. 
We obtain the relative improvements of IGS by comparing the average fairness rate of IR$_I$N and PR$_I$N schemes. 
Furthermore, the benefits of NOMA are obtained by comparing the performance of IR$_I$N and IR$_I$T schemes. Finally, the RIS benefits are evaluated by comparing the performance of IR$_I$N and IN schemes. 
As can be observed, the benefits of IGS as well as NOMA decrease with the number of transmit antennas. 
The reason is that the number of (spatial) resources per users increases with the number of antennas, which in turn decreases the interference level. Thus, the benefits of IGS as an interference-management technique are expected to be reduced. When the number of transmit antennas is lower than the number of users per cell, $N_{BS}< 2K$, BSs are overloaded, and the interference level is very high. 
As a result, we should employ interference-management techniques such as IGS and/or NOMA. 
However, when $N_{BS}> 2K$, BSs are underloaded, and the interference-level is low. Thus, we can simply treat the interference as noise and employ PGS without a considerable performance degradation. We also observe in Fig. \ref{Fig-nbs-nrs} that the benefits of RIS remain relatively constant when we switch to underloaded systems. This interesting result can show that the main role of RIS in this set up is to improve the coverage and channel condition of users. Thus, RIS can significantly improve the system performance in both underloaded and overloaded regimes. However, in overloaded regimes, we have to employ interference-management techniques to reap the benefits of RIS and achieve a better performance.

\subsubsection{Impact of number of users per cell}
\begin{figure}[t!]
    \centering
    \begin{subfigure}[t]{0.24\textwidth}
        \centering
\includegraphics[width=\textwidth]{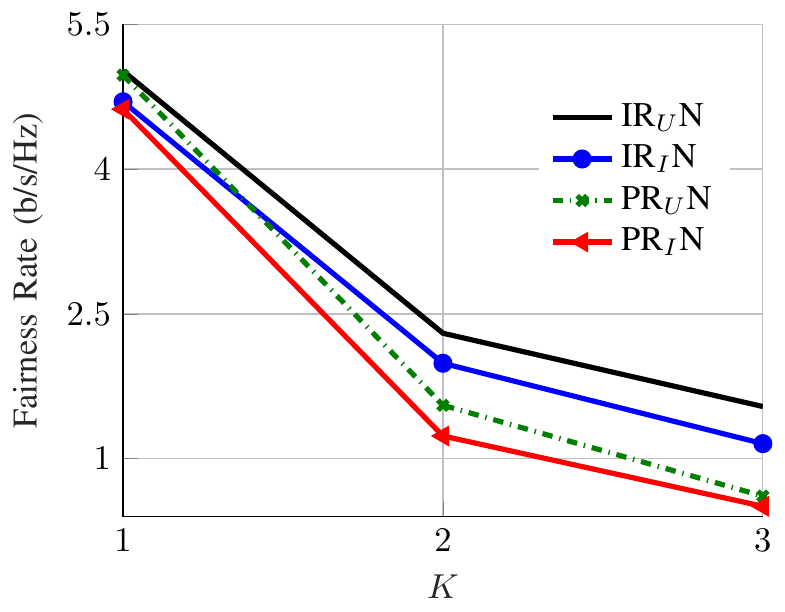}
        \caption{Average fairness rate.}
    \end{subfigure}%
    ~ 
    \begin{subfigure}[t]{0.24\textwidth}
        \centering
\includegraphics[width=.95\textwidth]{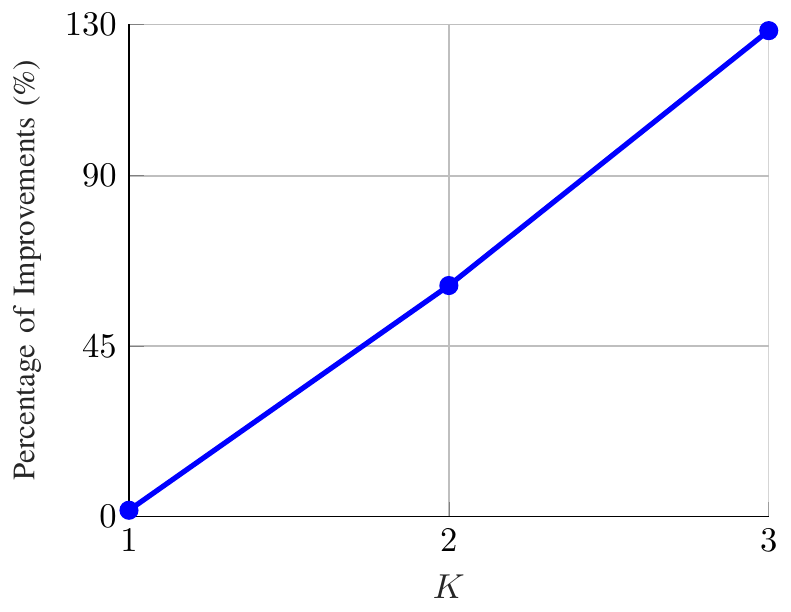}
        \caption{Benefits by IGS.}
    \end{subfigure}
	 \caption{The average fairness rate versus $K$ for $N_{BS}=2$, $P=10$, $N_{RIS}=25$, $L=2$,  $M=4$.}
	\label{Fig-10-nrs} 
\end{figure}
Fig. \ref{Fig-10-nrs} shows the average fairness rate versus $K$ for $N_{BS}=2$, $N_{RIS}=25$, $L=2$,  $M=4$. 
As can be observed, the benefits of employing IGS increase with the number of users. 
Indeed, there are only minor benefits by IGS when the number of users per cell is low. Particularly, when $K=1$, there is no intracell interference for the CCUs, and hence, the interference level is significantly reduced, which in turn, considerably decreases the benefits of IGS.  
On the contrary, the interference level increases with the number of users per cell for a fixed number of antennas. 
Thus, the benefits of IGS significantly increase with the number of users per cell as illustrated in Fig. \ref{Fig-10-nrs}b. 
Through Figs. \ref{Fig-nbs-nrs} and \ref{Fig-10-nrs}, we observe that we can get a very high gain by interference-management techniques whenever the BSs are overloaded, i.e., when the number of users per cell is larger than the number of transmit antennas $(2K>N_{BS})$. 
However, we can employ a simple PGS scheme with TIN when BSs are underloaded $(2K\leq N_{BS})$ without a considerable performance loss.

\subsubsection{Impact of number of RIS components}
\begin{figure}[t!]
    \centering
\includegraphics[width=.45\textwidth]{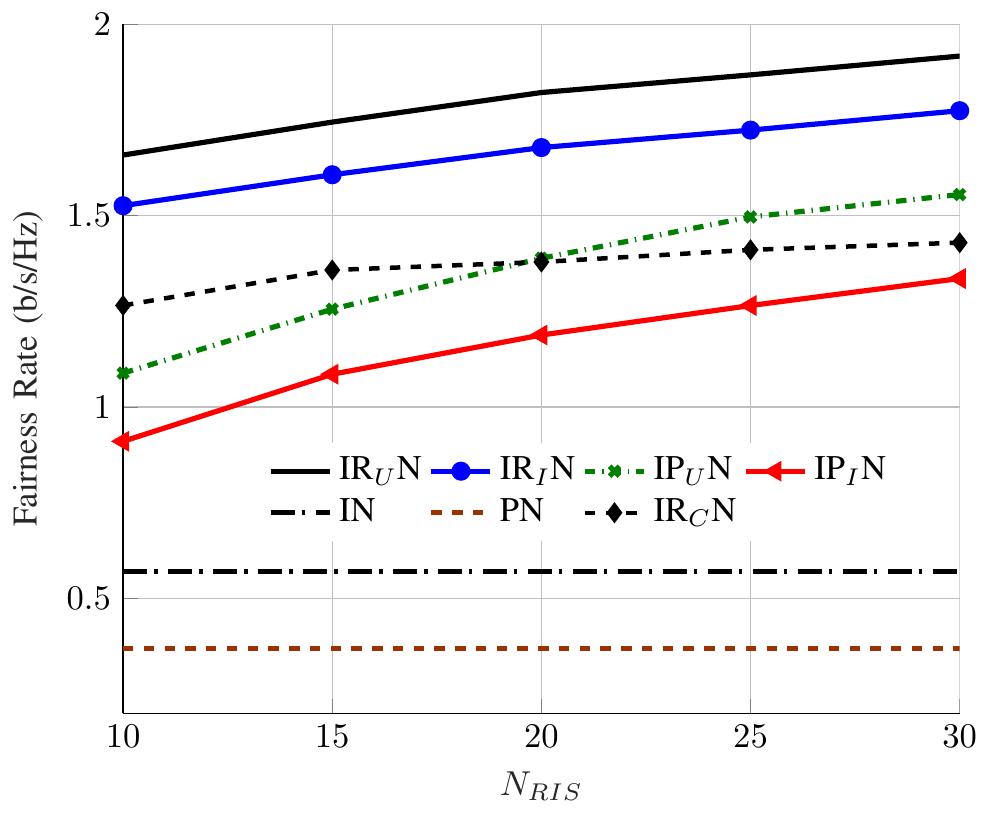}
    \caption{The average fairness rate versus $N_{RIS}$ for $N_{BS}=2$, $P=10$,  $M=2$.}
	\label{Fig-11-nrs} 
\end{figure}
Figure \ref{Fig-11-nrs} shows the average fairness rate versus $N_{RIS}$ for $N_{BS}=2$, $P=10$,  $M=2$. In this figure, there is only one RIS per cell, which is located close to the CEUs. 
As expected, the average minimum rate increases with $N_{RIS}$; however, the improvements in the rate are not significant, and even with a relatively low number of RIS components ($N_{RIS}=10$), RIS can provide a considerable gain.   
This means that we do not necessarily need a high number of RIS components to get benefits, which can be interesting from practical point of view.

\subsubsection{Convergence}
\begin{figure}[t!]
    \centering
    \begin{subfigure}[t]{0.24\textwidth}
        \centering
\includegraphics[width=\textwidth]{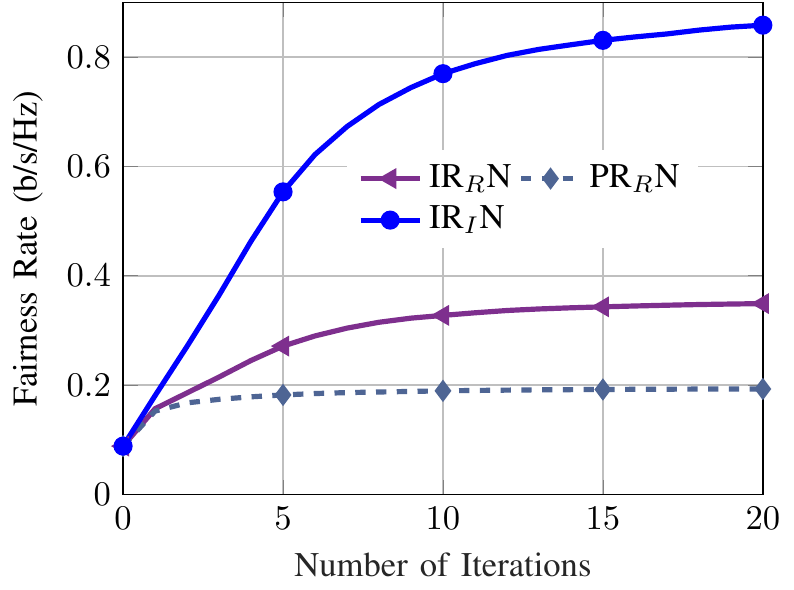}
        \caption{$N_{BS}=1$ and $P=1$W.}
    \end{subfigure}%
    ~
    \begin{subfigure}[t]{0.24\textwidth}
        \centering
\includegraphics[width=\textwidth]{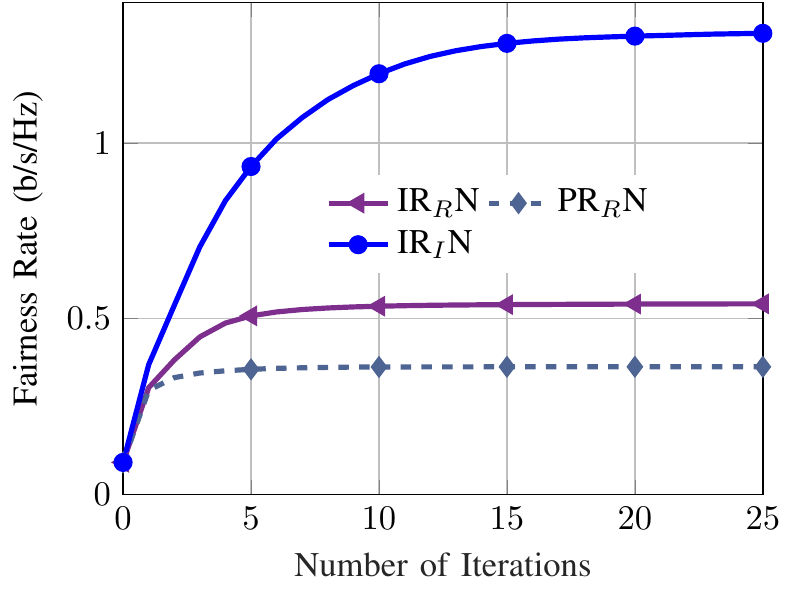}
        \caption{$N_{BS}=2$ and $P=1$W.}
    \end{subfigure}
    \caption{The average fairness rate versus number of iterations for $N_{RIS}=25$, $L=2$, $K=2$, and $M=4$.}
	\label{Fig-4} 
\end{figure}
Figure \ref{Fig-4} shows the average fairness rate versus the number of iterations. 
As expected, the algorithms for RIS-assisted systems require more iterations for convergence and are slower than the algorithms that do not optimize over RIS components. 
However, the RIS-assisted algorithms outperform the other algorithms after only a few iterations. 
For instance, in Fig. \ref{Fig-4}b, the IGS algorithm provides a better solution than the final value of the I-R algorithm after just 3 iterations.

\subsection{EE maximization}
\begin{figure*}[t!]
    \centering
    \begin{subfigure}[t]{0.32\textwidth}
        \centering
\includegraphics[width=\textwidth]{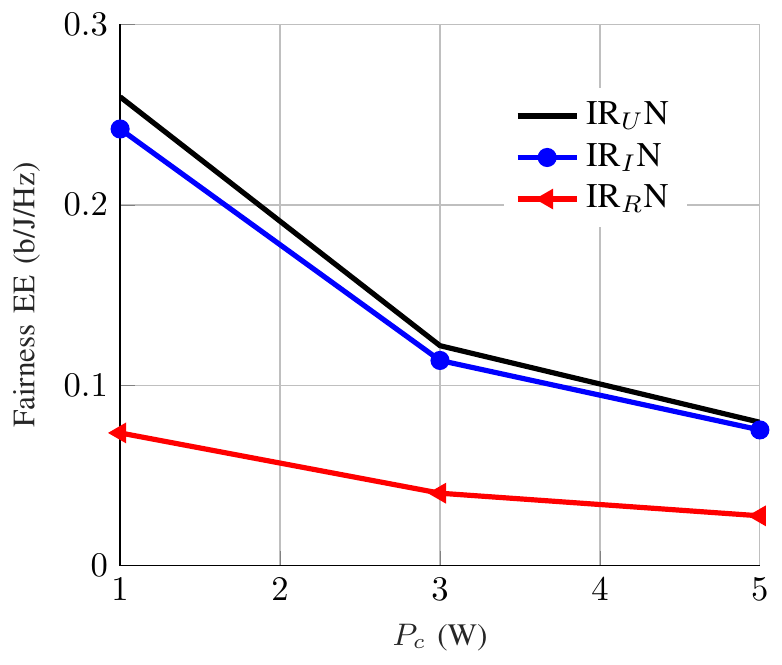}
        \caption{IGS schemes.}
    \end{subfigure}%
    ~ 
    \begin{subfigure}[t]{0.32\textwidth}
        \centering
\includegraphics[width=\textwidth]{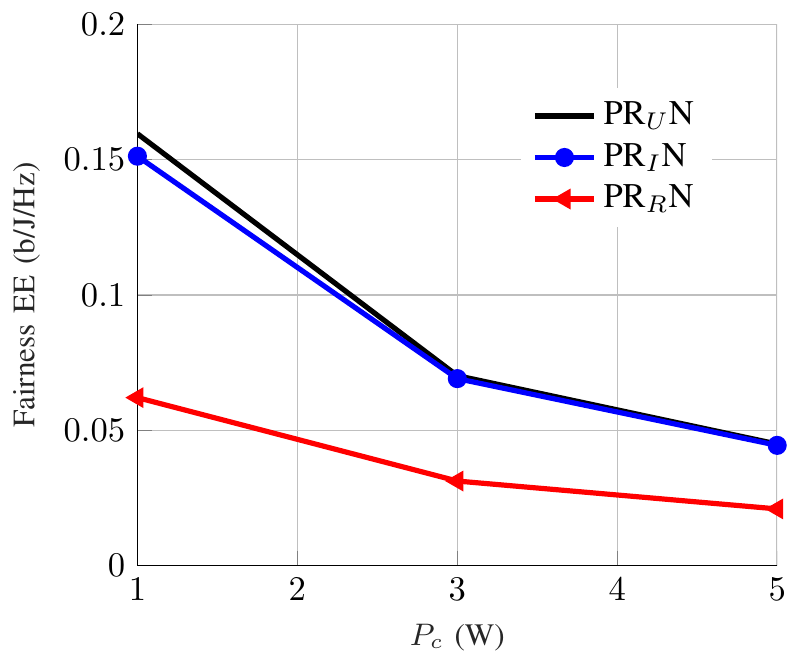}
        \caption{PGS schemes.}
    \end{subfigure}
	~
	\begin{subfigure}[t]{0.32\textwidth}
        \centering
\includegraphics[width=\textwidth]{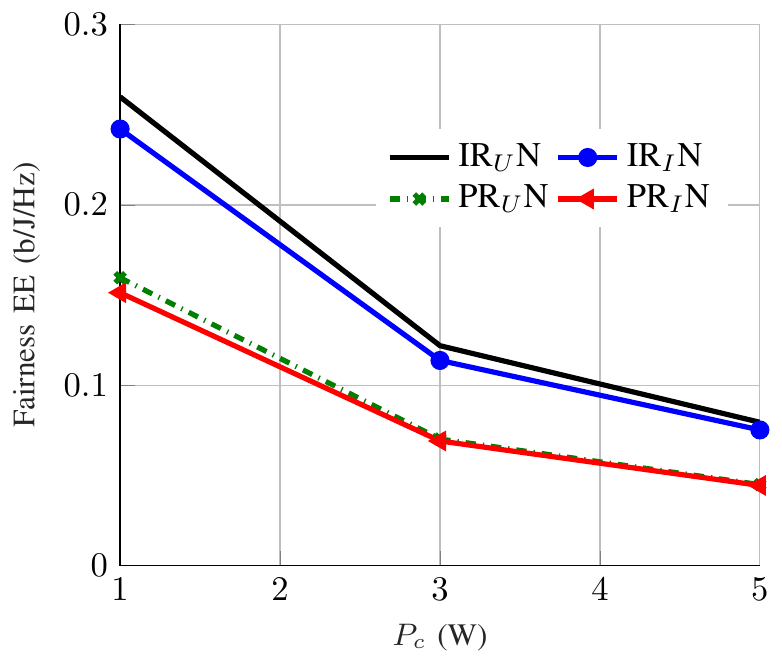}
        \caption{IGS and PGS schemes with RIS.}
    \end{subfigure}
    \caption{The average fairness EE versus $P_c$ for $P=1$w, $N_{BS}=N_u=1$, $N_{RIS}=25$, $L=2$, $K=2$, $M=2$.}
	\label{Fig-ee} 
\end{figure*}
In this subsection, we provide the results for the maximization of the minimum EE of users. 
We expect that we observe the same behavior in the performance of the proposed algorithms by varying different parameters such as $K$, $N_{BS}$, $N_{RIS}$. Hence, due to the space restriction, we provide only a figure for the EE maximization. 

In Fig. \ref{Fig-ee}, we show the minimum EE of users versus $P_c$ for $P=1$w, $N_{BS}=N_u=1$, $N_{RIS}=25$, $L=2$, $K=2$, $M=2$. As can be observed, IGS can significantly outperform PGS. 
Moreover, the difference between the IGS scheme and the upper bound (I-M scheme) is not significant, which is because of a lower transmission power. 

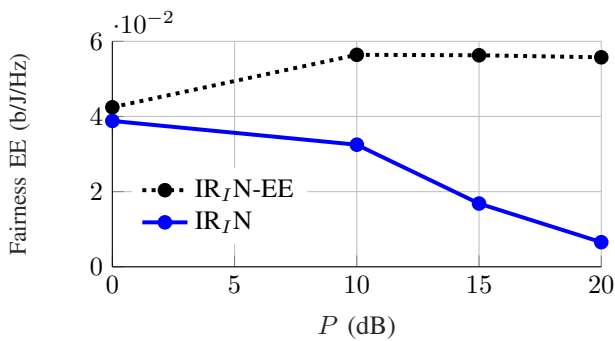
\begin{figure}[t!]
    \centering
%
%
\definecolor{mycolor1}{rgb}{0.00000,0.44700,0.74100}%
\definecolor{mycolor2}{rgb}{0.85000,0.32500,0.09800}%
\begin{tikzpicture}

\begin{axis}[%
width=6.5cm,
height=3cm,
at={(2.11in,0.971in)},
scale only axis,
xmin=0,
xmax=20,
xlabel style={font=\color{white!15!black}},
xlabel={$P$ (dB)},
ymin=0,
ymax=0.06,
ylabel style={font=\color{white!15!black}},
ylabel={\small Fairness EE (b/J/Hz)},
axis background/.style={fill=white},
axis x line*=bottom,
axis y line*=left,
xmajorgrids,
ymajorgrids,
legend style={at={(0.04,0.1)}, anchor=south west, legend columns=1, legend cell align=left, align=left, draw=white}
]
\addplot [color=black, dotted, line width=1.5pt, mark=*, mark options={solid, fill=black, black}]
  table[row sep=crcr]{%
0	0.042456016377431\\
10	0.056408342144307\\
15	0.0562875168724661\\
20	0.0557490666006238\\
};
\addlegendentry{IR$_I$N-EE}

\addplot [color=blue, line width=1.5pt, mark=*, mark options={solid, fill=black, blue}]
  table[row sep=crcr]{%
0	0.038815360910628\\
10	0.0324876407210642\\
15	0.0168302478293593\\
20	0.0065489819755342\\
};
\addlegendentry{IR$_I$N}
\end{axis}
\end{tikzpicture}%
    \caption{The average fairness rate versus $P_c$ for SNR$=0$dB, $N_{BS}=1$, $N_{RIS}=15$, $L=2$, $K=2$, $M=2$.}
	\label{Fig-ee2}  
\end{figure}
In Fig. \ref{Fig-ee2}, we show the average fairness EE achieved by the NOMA-based IGS schemes proposed in Section III and in Section IV for $P_c=10$dB, $N_{BS}=1$, $N_{RIS}=15$, $L=2$, $K=2$, $M=2$. Indeed, we compare the performance of our proposed algorithms, which consider two different utility functions, from a fairness EE point of view. In other words, this figure shows how much EE is lost if the minimum rate of users  is maximized. The EE loss increases with the power budget. The reason is that if we consider a point-to-point system, the rate strictly increases with the transmission power. However, the EE is maximized by a specific  transmission power and is strictly decreasing for transmission powers higher than the optimum transmission power (see \cite[Fig. 2]{buzzi2016survey}).  When the  optimum transmission power for the MWEEM problem is lower than the power budget, the solution of the MWEEM problem does not change by increasing the power budget. That is why the fairness EE is almost constant for $P\geq 10$ dB. On the contrary, the transmission power of the solution of the MWRM problem is expected to be increased by the power budget, which results in a lower energy efficient operational point. Additionally, note that the MWEEM problem is equivalent to MWRM for a very large $P_c$ \cite{buzzi2016survey}. Thus, the algorithms for the MWRM problem perform close to the algorithms for the MWEEM problem when power budget is much lower than $P_c$.
 
\subsection{Summary of numerical results}
Our main findings in the numerical results can be summarized as in the following:
\begin{itemize}

\item RIS always improves the performance of the system; however, we have to properly optimize their elements to reap the full benefits of reconfigurable intelligent surfaces. In some cases, a RIS with random phases may even worsen performance, which shows the importance of optimizing  the RIS components.  

\item The combination of NOMA with appropriate user pairing plus IGS can highly improve the system performance from both spectral and energy efficiency perspectives when the BSs are overloaded, i.e., when $N_{BS}<2K$.  In this case, the interference level is high, which significantly degrades the system performance if interference is not properly managed.

\item Our results show that, even with a relatively low number of RIS components, RIS can provide a considerable performance gain. This suggests that RIS can be a promising technology from a practical point of view. 

\end{itemize}

\section{Conclusion}\label{sec-con}
In this paper, we proposed NOMA-based IGS schemes to improve the spectral and energy efficiency of multicell RIS-assisted BCs. We showed that RIS can significantly improve the system performance in both overloaded ($N_{BS}< 2K$) and underloaded ($N_{BS}\geq 2K$) regimes. In overloaded regimes,  RIS is unable to completely mitigate undesired consequences of interference, and we have to employ interference-management techniques to fully exploit the benefits of RIS. In other words, RIS cannot be used alone as an   interference-management technique when $N_{BS}< 2K$. However, in underloaded regimes, the interference level is low, and there are only minor benefits (if there exist any) by employing IGS/NOMA. We also showed that, even with a relatively low number of RIS components, RIS can considerably improve the system performance, which indicates that RIS can be efficiently implemented in practice. Moreover, we showed that RIS components should be carefully optimized. Otherwise, RIS may only provide minor benefits or even may decrease the system performance, especially in SISO systems.  

As a future work, it is interesting to investigate the performance of IGS and/or NOMA in the presence of imperfect CSI, employing a robust design similar to \cite{zhou2020framework}. Moreover, analyzing the performance gap between our proposed algorithms and the global optimal solution can be also another challenging direction for a future study.

\bibliographystyle{IEEEtran}
\bibliography{ref2}

\end{document}